\documentclass[journal,10pt,a4paper]{IEEEtran}

\usepackage{microtype}                
\usepackage{newtxtext, newtxmath}     
\usepackage{amsmath} 
\usepackage{stmaryrd}                 
\usepackage{xcolor}                   
\usepackage{ulem}                     
\usepackage{booktabs}                 
\usepackage{threeparttable}           
\usepackage{multirow}                 
\usepackage{makecell}                 
\usepackage{pifont}                   
\newcommand{\cmark}{\ding{51}}        
\newcommand{\xmark}{\ding{55}}        

\usepackage{graphicx}                 
\usepackage{wrapfig}                  
\usepackage{float}                    
\usepackage[caption=false,font=normalsize,labelfont=rm,textfont=rm]{subfig}
\usepackage{chngpage}                 
\usepackage{lipsum}                   

\usepackage{algorithm}                
\usepackage{algorithmic}              

\usepackage[lambda,advantage,operators,sets,adversary,landau,probability,
            notions,logic,ff,mm,primitives,events,complexity,asymptotics,keys]{cryptocode}

\newtheorem{definition}{Definition}[section]  
\newtheorem{theorem}{Theorem}[section]
\newtheorem{assumption}{Assumption}[section]
\newenvironment{proof}{\par\noindent\textit{Proof. }\ignorespaces}{\hfill$\square$\par}

\usepackage{hyperref}
\hypersetup{
    colorlinks=true,
    linkcolor=blue,
    citecolor=blue,
    urlcolor=blue,
    breaklinks=true
}

\begin{document}

\title{\textit{SRFed}: Mitigating Poisoning Attacks in Privacy-Preserving Federated Learning \\
with Heterogeneous Data}

\author{Yiwen Lu\IEEEauthorrefmark{1},%
\IEEEauthorrefmark{2}Corresponding Author\\
\IEEEauthorrefmark{1}School of Mathematics, Nanjing University, Nanjing 210093, China\\
\IEEEauthorrefmark{2}E-mail: luyw@smail.nju.edu.cn
}
\maketitle
\begin{abstract}
Federated Learning (FL) enables collaborative model training without exposing clients' private data, and has been widely adopted in privacy-sensitive scenarios. However, FL faces two critical security threats: curious servers that may launch inference attacks to reconstruct clients' private data, and compromised clients that can launch poisoning attacks to disrupt model aggregation. Existing solutions mitigate these attacks by combining mainstream privacy-preserving techniques with defensive aggregation strategies. However, they either incur high computation and communication overhead or perform poorly under non-independent and identically distributed (Non-IID) data settings. To tackle these challenges, we propose SRFed, an efficient Byzantine-robust and privacy-preserving FL framework for Non-IID scenarios. First, we design a decentralized efficient functional encryption (DEFE) scheme to support efficient model encryption and non-interactive decryption. DEFE also eliminates third-party reliance and defends against server-side inference attacks. Second, we develop a privacy-preserving defensive model aggregation mechanism based on DEFE. This mechanism filters poisonous models under Non-IID data by layer-wise projection and clustering-based analysis. Theoretical analysis and extensive experiments show that SRFed outperforms state-of-the-art baselines in privacy protection, Byzantine robustness, and efficiency.
\end{abstract}

\begin{IEEEkeywords}
Federated learning, privacy-preserving, functional encryption, poisoning attacks, Non-IID.
\end{IEEEkeywords}


\section{Introduction}

Federated learning (FL) has emerged as a promising paradigm for distributed machine learning, which enables multiple clients to collaboratively train a global model without sharing their private data. In a typical FL setup, multiple clients periodically train local models using their private data and upload model updates to a central server, which aggregates these updates to obtain a global model with enhanced performance. Due to its ability to protect data privacy, FL has been widely applied in real-world scenarios, such as intelligent driving \cite{Nc_driving, FL_intelligent_driving1, FL_intelligent_driving2, FL_intelligent_driving3, Nc_driving2}, medical diagnosis \cite{TIFS_medical1, FL_medical_diagnosis1, FL_medical_diagnosis2, FL_medical_diagnosis3}, and intelligent recommendation systems \cite{FL_recommendation1, TIFS_recommendation2, TIFS_recommendation3, TIFS_recommendation1}.

Although FL avoids direct data exposure, it is not immune to privacy and security risks. Existing studies \cite{TIFS_inference_attack1,   model_inversion_attacks1, TIFS_inference_attack2} have shown that the curious server may launch inference attacks to reconstruct sensitive data samples of clients from the model updates. This may lead to the leakage of clients' sensitive information, e.g., medical diagnosis records, which can be exploited by adversaries for launching further malicious activities. Moreover, FL is also vulnerable to poisoning attacks \cite{TIFS_poisoning_attack2, poisoning_step, TIFS_poisoning_attacks}. The adversaries may manipulate some clients to execute malicious local training and upload poisonous model updates to mislead the global model \cite{poisoning_step, poisoning_attacks, Nc_poisoning_attack}. This attack will damage the performance of the global model and lead to incorrect decisions in downstream tasks such as medical diagnosis and intelligent driving.

To address privacy leakage issues, existing privacy-preserving federated learning (PPFL) methods primarily rely on techniques such as differential privacy (DP) \cite{DP_2constants, DP_Expert_Systems_with_Applications, DP_adaptive, TIFS_DP1, TIFS_DP2}, secure multi-party computation (SMC) \cite{SMC_Median_LSFL, SMC_information_sciences}, and homomorphic encryption (HE) \cite{HE_shieldfl, HE_ASPLOS25}. However, DP typically leads to a reduction in global model accuracy, while SMC and HE incur substantial computational and communication overheads in FL. Recently, lightweight functional encryption (FE) schemes \cite{NDDFE_2022_Big_data, NIFE_2024_BSR_FL} have been applied in FL. FE enables the server to aggregate encrypted models and directly obtain the decrypted results via a functional key, which avoids the accuracy loss in DP and the extra communication overhead in HE/SMC. However, existing FE schemes rely on third parties for functional key generation, which introduces potential privacy risks.


To mitigate poisoning attacks, existing Byzantine-robust FL methods \cite{foolsgold, Krum, Median_original, Nc_tongji} typically adopt defensive aggregation strategies, which filter out malicious model updates based on statistical distances or performance-based criteria. However, these strategies rely on the assumption that clients' data distributions are homogeneous, leading to poor performance in non-independent and identically distributed (Non-IID) data settings. Moreover, they require access to plaintext model updates, which directly contradicts the design goal of PPFL \cite{NIFE_2024_BSR_FL}. Recently, privacy-preserving Byzantine-robust FL methods \cite{pbfl, HE_shieldfl} have been proposed to address both privacy and poisoning attacks. However, these methods still suffer from limitations such as accuracy loss, excessive overhead, and limited effectiveness in Non-IID environments, as they merely combine the PPFL with existing defensive aggregation strategies. As a result, there is still a lack of practical solutions that can simultaneously ensure privacy protection and Byzantine robustness in Non-IID data scenarios.

To address the limitations of existing FL solutions, we propose a novel Byzantine-robust privacy-preserving FL method, SRFed. SRFed achieves efficient privacy protection and Byzantine robustness in Non-IID data scenarios through two key designs. First, we propose a new functional encryption scheme, DEFE, to protect clients’ model privacy and resist inference attacks from the server. Compared with existing FE schemes, DEFE eliminates reliance on third parties through distributed key generation and improves decryption efficiency by reconstructing the ciphertext. Second, we develop a privacy-preserving robust aggregation strategy based on secure layer-wise projection and clustering. This strategy resists the poisoning attacks in Non-IID data scenarios. Specifically, this strategy first decomposes client models layer by layer, projects each layer onto the corresponding layer of the global model. Then, it performs clustering analysis on the projection vectors to filter malicious updates and aggregates the remaining benign models. DEFE supports the above secure layer-wise projection computation and enables privacy-preserving model aggregation. Finally, we evaluate SRFed on multiple datasets with varying levels of heterogeneity. Theoretical analysis and experimental results demonstrate that SRFed achieves strong privacy protection, Byzantine robustness, and high efficiency. In summary, the core contributions of this paper are as follows. 
\begin{itemize}
    \item We propose a novel secure and robust FL method, SRFed, which simultaneously guarantees privacy protection and Byzantine robustness in Non-IID data scenarios.
    \item We design an efficient functional encryption scheme, which not only effectively protects the local model privacy but also enables efficient and secure model aggregation.
    \item We develop a privacy-preserving robust aggregation strategy, which effectively defends against poisoning attacks in Non-IID scenarios and generates high-quality aggregated models.
    \item We implement the prototype of SRFed and validate its performance in terms of privacy preservation, Byzantine robustness, and computational efficiency. Experimental results show that SRFed outperforms state-of-the-art baselines in all aspects.
\end{itemize}

\section{Related Works}
\subsection{Privacy-Preserving Federated Learning}

To safeguard user privacy, current research on privacy-preserving federated learning (PPFL) mainly focuses on protecting gradient information. Existing solutions are primarily built upon four core technologies: Differential Privacy (DP) \cite{DP_2constants, DP_Expert_Systems_with_Applications,DP_adaptive}, Secure Multi-Party Computation (SMC) \cite{SMC_Median_LSFL, SMC_information_sciences}, Homomorphic Encryption (HE) \cite{HE_shieldfl, TIFS_homomorphic_encryption,  HE_ASPLOS25}, and Functional Encryption (FE) \cite{NDDFE_2022_Big_data, NIFE_2024_BSR_FL, ESFL_ZHH_2025, FE_TIFS}. DP achieves data indistinguishability by injecting calibrated noise into raw data, thus ensuring privacy with low computational overhead. Miao et al. \cite{DP_adaptive} proposed a DP-based ESFL framework that adopts adaptive local DP to protect data privacy. However, the injected noise inevitably degrades the model accuracy. To avoid accuracy loss, SMC and HE employ cryptographic primitives to achieve privacy preservation. SMC enables distributed aggregation while keeping local gradients confidential, revealing only the aggregated model update. Zhang et al. \cite{SMC_Median_LSFL} introduced LSFL, a secure FL framework that applies secret sharing to split and transmit local parameters to two non-colluding servers for privacy-preserving aggregation. HE allows direct computation on encrypted data and produces decrypted results identical to plaintext computations. This property preserves privacy without sacrificing accuracy. Ma et al. \cite{HE_shieldfl} developed ShieldFL, a robust FL framework based on two-trapdoor HE, which encrypts all local gradients and achieves aggregation of encrypted gradients. Despite their strong privacy guarantees, SMC/HE-based FL methods incur substantial computation and communication overhead, posing challenges for large-scale deployment. To address these issues, FE has been introduced into FL. FE avoids noise injection and eliminates the high overhead caused by multi-round interactions or complex homomorphic operations. Chen et al. \cite{NDDFE_2022_Big_data} proposed ESB-FL, an efficient secure FL framework based on non-interactive designated decrypter FE (NDD-FE), which protects local data privacy but relies on a trusted third-party entity. Yu et al. \cite{FE_TIFS} further proposed PrivLDFL, which employs a dynamic decentralized multi-client FE (DDMCFE) scheme to preserve privacy in decentralized settings. However, both FE-based methods require discrete logarithm-based decryption, which is typically a time-consuming operation. To overcome these limitations, we propose a decentralized efficient functional encryption (DEFE) scheme that achieves privacy protection and high computational and communication efficiency.

\begin{table}[t] 
\centering
\caption{COMPARISON BETWEEN OUR METHOD WITH PREVIOUS WORK}
\label{tab:comparison}
\resizebox{\columnwidth}{!}{
\begin{threeparttable}
    \begin{tabular}{|c|c|c|c|c|c|}  
      \hline  
      \textbf{Methods} & \textbf{Privacy \ Protection} & \textbf{Defense Mechanism} & \textbf{Efficient} & \textbf{Non-IID} & \textbf{Fidelity} \\ \hline
      ESFL \cite{DP_adaptive}         & Local DP                     & Local DP                   & \cmark            & \xmark & \xmark \\ \hline
      PBFL \cite{pbfl}         & CKKS                         & Cosine similarity          & \xmark            & \xmark  & \cmark \\ \hline
      ESB-FL \cite{NDDFE_2022_Big_data}      & NDD-FE                       & \xmark                     & \cmark            & \xmark  & \cmark \\ \hline
      Median \cite{Median_original}      & \xmark                       & Median                     & \cmark            & \xmark     & \cmark \\ \hline
      FoolsGold \cite{foolsgold}       & \xmark                       & Cosine similarity          & \cmark            & \xmark      & \cmark \\ \hline
      ShieldFL \cite{HE_shieldfl}     & HE                            & Cosine similarity          & \xmark            & \xmark & \cmark \\ \hline
      PrivLDFL \cite{FE_TIFS}        & DDMCFE                       & \xmark          & \cmark            & \xmark    & \cmark \\ \hline
      Biscotti \cite{Mkrum}    & DP                            & Euclidean distance         & \cmark            & \xmark  & \xmark \\ \hline
      \textbf{SRFed}  & \textbf{DEFE} & \makecell{\textbf{Layer-wise projection} \\ \textbf{and clustering}} & \textbf{\cmark} & \textbf{\cmark} & \textbf{\cmark} \\ \hline  
    \end{tabular}

    \begin{tablenotes}
      \large
      \item \textbf{Notes}: The symbol "\cmark" indicates that it owns this property; "\xmark" indicates that it does not own this property. "Fidelity" indicates that the method has no accuracy loss when there is no attack. "Non-IID" indicates that the method is Byzantine-Robust under Non-IID data environments.
    \end{tablenotes}
  \end{threeparttable}
}
\end{table}

\subsection{Privacy-Preserving Federated Learning Against Malicious Participants}

To resist poisoning attacks, several defensive aggregation rules have been proposed in FL. FoolsGold \cite{foolsgold}, proposed by Fung et al. \cite{foolsgold}, reweights clients' contributions by computing the cosine similarity of their historical gradients. Krum \cite{Krum} selects a single client update that is closest, in terms of Euclidean distance, to the majority of other updates in each iteration. Median \cite{Median_original} mitigates the effect of malicious clients by taking the median value of each model parameter across all clients. However, the above aggregation rules require access to plaintext model updates. This makes them unsuitable for direct application in PPFL. To achieve Byzantine-robust PPFL, Shiyan et al. \cite{Mkrum} proposed Biscotti. Biscotti leverages DP to protect local gradients while using the Krum algorithm to mitigate poisoning attacks. Nevertheless, the injected noise in DP reduces the accuracy of the aggregated model. To overcome this limitation, Zhang et al. \cite{SMC_Median_LSFL} proposed LSFL. LSFL employs SMC to preserve privacy and uses Median-based aggregation for poisoning defense. However, its dual-server architecture introduces significant communication overhead. In addition, Ma et al. \cite{HE_shieldfl} and Miao et al. \cite{pbfl} proposed ShieldFL and PBFL, respectively. Both schemes adopt HE to protect local gradients and cosine similarity to defend against poisoning attacks. However, they suffer from high computational complexity and limited robustness under non-IID data settings. To address these challenges, we propose a novel Byzantine-robust and privacy-preserving federated learning method. Table \ref{tab:comparison} compares previous schemes with our method.

\section{Problem Statement}
\subsection{System Model}
The system model of SRFed comprises two roles: the aggregation server and clients.
\begin{itemize}
    \item \textbf{Clients:} Clients are nodes with limited computing power and heterogeneous data. In real-world scenarios, data heterogeneity typically arises across clients (e.g., intelligent vehicles) due to differences in usage patterns, such as driving habits. Each client is responsible for training its local model based on its own data. To protect data privacy, the models are encrypted and submitted to the server for aggregation.

    \item \textbf{Server:} The server is a node with strong computing power (e.g., service provider of intelligent vehicles). It collects encrypted local models from clients, conducts model detection, and then aggregates selected models and distributes the aggregated model back to clients for the next training round.
    
\end{itemize}

\subsection{Threat Model}
We consider the following threat model:

\textit{1) Honest-But-Curious server:} The server honestly follows the FL protocol but attempts to infer clients’ private data. Specifically, upon receiving encrypted local models from the clients, the server may launch inference attacks on the encrypted models and exploit intermediate computational results (e.g., layer-wise projections and aggregated outputs) to extract sensitive information of clients.

\textit{2) Malicious clients:} We consider a FL scenario where a certain proportion of clients are malicious. These malicious clients conduct model poisoning attacks to poison the global model, thereby disrupting the training process. Specifically, we focus on the following attack types:
\begin{itemize}
    \item Targeted poisoning attack. This attack aims to poison the global model so that it incurs erroneous predictions for the samples of a specific label. More specifically, we consider the prevalent label-flipping attack \cite{HE_shieldfl}. Malicious clients remap samples labeled $l_{src}$ to a chosen target label $l_{tar}$ to obtain a poisonous dataset $D_i^*$. Subsequently, they train local models based on $D_i^*$ and submit the poisonous models to the server for aggregation. As a result, the global model is compromised, leading to misclassification of source-label samples as the target label during inference.
    
    \item Untargeted poisoning attack. This attack aims to degrade the global model’s performance on the test samples of all classes. Specifically, we consider the classic Gaussian Attack \cite{SMC_Median_LSFL}. The malicious clients first train local models based on the clean dataset. Then, they inject Gaussian noise into the model parameters and submit the malicious models to the server. Consequently, the aggregated model exhibits low accuracy across test samples of all classes.
\end{itemize}

\subsection{Design Goals}
Under the defined threat model, SRFed aims to ensure the following security and performance guarantees:
\begin{itemize}
    \item \textbf{Confidentiality.} SRFed should ensure that any unauthorized entities (e.g., the server) cannot infer clients’ private training data from the encrypted models or intermediate results. 
    \item \textbf{Robustness.} SRFed should mitigate poisoning attacks launched by malicious clients under Non-IID data settings while maintaining the quality of the final aggregated model.
    \item \textbf{Efficiency.} SRFed should ensure efficient FL, with the introduced DEFE scheme and robust aggregation strategy incurring only limited computation and communication overhead. 
\end{itemize}

\section{Building Blocks}
\subsection{NDD-FE Scheme}
NDD-FE \cite{NDDFE_2022_Big_data} is a functional encryption scheme that supports the inner-product computation between a private vector $\boldsymbol{x}$ and a public vector $\boldsymbol{y}$. NDD-FE involves three roles, i.e., generator, encryptor, and decryptor, to elaborate on its construction. 
\begin{itemize}
    \item \textbf{NDD-FE.Setup(${1}^\lambda$)$\rightarrow$$pp$}: It is executed by the generator. It takes the security parameter ${1}^\lambda$ as input and generates the system public parameters $pp=(G, p, g)$ and a secure hash function $H_1$.
    
    \item \textbf{NDD-FE.KeyGen($pp$)$\rightarrow$$(pk,sk)$}: It is executed by all roles. It takes $pp$ as input and outputs public/secret keys $(pk,sk)$. Let $(pk_1, sk_1),$ $(pk_{2i}, sk_{2i})$ and $(pk_3, sk_3)$ denote the public/secret key pairs of the generator, the $i$-th encryptor and the decryptor, respectively. 
    
    \item \textbf{NDD-FE.KeyDerive($pk_1, sk_1, \{pk_{2i}\}_{i=1,2,\dots,I}, ctr, \boldsymbol{y},$\\$ aux$)$\rightarrow$$sk_\otimes$}: It is executed by the generator. It takes $(pk_1, sk_1)$, the $\{pk_{2i}\}_{i=1,2,\dots,I}$ of $I$ encryptors, an incremental counter $ctr$, a vector $\boldsymbol{y}$ and auxiliary information $aux$ as input, and outputs the functional key $sk_\otimes$.
    
    \item \textbf{NDD-FE.Encrypt($pk_1, sk_{2i}, pk_3, ctr, x_i, aux$)$\rightarrow$$c_i$}: It is executed by the encryptor. It takes $pk_1, (pk_{2i}, sk_{2i}),$ $pk_3, ctr, aux$, and the data $x_i$ as input, and outputs the ciphertext $c_i = pk_1^{r_i^{ctr}} \cdot pk_3^{x_i}$, where $r_i^{ctr}$ is generated by $H_1$.
    
    \item \textbf{NDD-FE.Decrypt($pk_1, sk_\otimes, sk_3, \{ct_i\}_{i=1,2,\dots,I}, \boldsymbol{y}$)$\rightarrow$\\$\langle \boldsymbol{x}, \boldsymbol{y} \rangle$}: It is executed by the decryptor. It takes $pk_1, sk_\otimes,$ $sk_3$, $\{ct_i\}_{i=1,2,\dots,I}$ and $\boldsymbol{y}$ as input. First, it outputs $g^{\langle \boldsymbol{x}, \boldsymbol{y} \rangle}$ and subsequently calculates $\log(g^{\langle \boldsymbol{x}, \boldsymbol{y} \rangle})$ to reconstruct the result of the inner product of $( \boldsymbol{x}, \boldsymbol{y} )$.
 
\end{itemize}

\subsection{The Proposed Decentralized Efficient Functional Encryption Scheme}   
We propose a decentralized efficient functional encryption (DEFE) scheme for more secure and efficient inner product operations. Our DEFE is an adaptation of NDD-FE in three aspects: 
\begin{itemize}
    \item \textbf{Decentralized authority:} DEFE eliminates reliance on the third-party entities (e.g., the generator) by enabling encryptors to jointly generate the decryptor's decryption key. 
    \item  \textbf{Mix-and-Match attack resistance:} DEFE inherently restricts the decryptor from obtaining the true inner product results, which prevents the decryptor from launching inference attacks.
    \item  \textbf{Efficient decryption:} DEFE enables efficient decryption by modifying the ciphertext structure. This avoids the costly discrete logarithm computations in NDD-FE.
\end{itemize}
We consider our SRFed system with one decryptor (i.e., the server) and $I$ encryptors (i.e., the clients). The $i$-th encryptor encrypts the $i$-th component $x_i$ of the $I$-dimensional message vector $\boldsymbol{x}$. The message vector $\boldsymbol{x}$ and key vector $\boldsymbol{y}$ satisfy $\|\boldsymbol{x}\|_\infty \le X$ and $\|\boldsymbol{y}\|_\infty \le Y$, with $X \cdot Y < N$, where $N$ is the Paillier composite modulus \cite{paillier}. Decryption yields $\langle \boldsymbol{x}, \boldsymbol{y} \rangle \bmod N$, which equals the integer inner product $\langle \boldsymbol{x}, \boldsymbol{y} \rangle$ under these bounds. Let $M = \left\lfloor \frac{1}{2} \left( \sqrt{\frac{N}{I}} \right)\right\rfloor$. We assume $X, Y < M$ in DEFE. Specifically, the construction of the DEFE scheme is as follows. The notations are described in Table ~\ref{tab:notations}. 

\begin{table}
	\caption{Notation Descriptions}
    \resizebox{\columnwidth}{!}{
	\label{tab:notations}
		\begin{tabular}{cccc}
			\toprule
			Notations & Descriptions&Notations & Descriptions\\
			\midrule
			$pk, sk$ & Public/secret key & $skf$ & Functional key    \\
			$T$ & Total training round &  $t$ & Training round \\
            $I$ & Number of clients & $C_i$ & The $i$-th client \\
            $D_i$ & Dataset of $C_i$ &  $D_i^*$ & Poisoned dataset \\
            $l$ &  Model layer      &    $\zeta$ & Length of $W_t$ \\
			$W_t$ & Global model &    $W_{t+1}$ & Aggregated model\\
            $W_t^i$ & Benign model & $(W_{t}^{i})^*$ & Poisonous model\\
            $|W_t^{(l)}|$ & Length of $W_{t}^{l}$  & $\lVert W_t^{(l)} \rVert$ & The Euclidean norm of $\lVert W_t^{(l)} \rVert$\\
			$\eta$ & Hash noise &  $H_1$ & Hash function  \\
			  $noise$ &  Gaussian noise    & $E_{t}^{i}$ & Encrypted update\\
            $V_t^i$ & projection vector & $OA$  & Overall accuracy \\
			$SA$ & Source accuracy & $ASR$ & Attack success rate \\
			\bottomrule
	\end{tabular}
    }
\end{table}

\begin{itemize}
	\item
	\(
	\textbf{DEFE.Setup}(1^\lambda, X, Y) \rightarrow pp
	\):
	It takes the security parameter $1^\lambda$ as input and outputs the public parameters $pp$, which include the modulus $N$, generator $g$, and hash function $H_1$. 
	It initializes by selecting safe primes $p = 2p' + 1$ and $q = 2q' + 1$ with $p', q' > 2^{l(\lambda)}$ (where $l$ is a polynomial in $\lambda$), ensuring the factorization hardness of $N = pq$ is $2^\lambda$-hard and $N > XY$. A generator $g'$ is uniformly sampled from $\mathbb{Z}_{N^2}^*$, and $g = g'^{2N} \mod N^2$ is computed to generate the subgroup of $(2N)$-th residues in $\mathbb{Z}_{N^2}^*$. Hash function $H_1: \mathbb{Z} \times \mathbb{N} \times \mathcal{AUX} \rightarrow \mathbb{Z}$ is defined, where $\mathcal{AUX}$ denotes auxiliary information (e.g., task identifier, timestamp).

\item
	\(
	\textbf{DEFE.KeyGen}(1^\lambda, N,g) \rightarrow (pk_i,   sk_{i})
	\):
        It is executed by $n$ encryptors. It takes $\lambda$, $N$, and $g$ as input, and outputs the corresponding key pair $(pk_i, sk_i)$. For the $i$-th encryptor, an integer $s_i$ is drawn from a discrete Gaussian distribution $D_{\mathbb{Z}, \sigma}$ ($\sigma > \sqrt{\lambda} \cdot N^{5/2}$), and the public key $h_i = g^{s_i} \mod N^2$, forming key pair $(pk_i = h_i, sk_i = s_i)$.

\item
	\(
	\textbf{DEFE.Encrypt}(pk_i, sk_i, ctr, x_i, aux) \rightarrow ct_i
	\): It is executed by $I$ encryptors. It takes key pair $(pk_i, sk_i)$, counter $ctr$, data $x_i \in \mathbb{Z}$, and $aux$ as input, and outputs noise-augmented ciphertext $ct_i \in \mathbb{Z}_{N^2}$. Considering the multi-round training process of FL, each $i$-th encryptor generates a noise value $\eta_{t,i}$ for the $t$-th round following the recursive relation $\eta_{t,i} = H_1(\eta_{t-1,i}, pk_i, ctr) \mod M$, where $ctr$ is an incremental counter. The initial noise $\eta_{0,i}$ is uniformly set across all encryptors via a single communication. Using the noise-augmented data $x'_i = x_i + \eta_{t,i}$ and secret key $sk_i$, the encryptor computes the ciphertext $ct_i^{\prime} = (1+N)^{x'_i} \cdot g^{r_i^{ctr}} \mod N^2$ with $r_i^{ctr} = H_1(sk_i, ctr, aux)$ and $aux \in \mathcal{AUX}$.   

\item
\(
\textbf{DEFE.FunKeyGen}\bigl( (pk_i, sk_i)_{i=1}^I, ctr, y_i, aux \bigr) \rightarrow skf_{i,\boldsymbol{y}}
\): It is executed by $I$ encryptors. Each encryptor computes its partial functional key \(skf_{i,\boldsymbol{y}}\). It takes the public/secret key pairs ${(pk_i, sk_i)}^{n}_{i=1}$ of encryptors and the 
	$i$-th component $y_i$ of the key vector $\boldsymbol{y}$ as inputs and outputs:
	\begin{equation}
	    skf_{i,\boldsymbol{y}} = r_i^{ctr} y_i + \sum\nolimits_{j=1}^{i-1} \varphi^{i,j} - \sum\nolimits_{j=i+1}^{I} \varphi^{i,j},
	\end{equation}
	where $r_i^{ctr} = H_1(sk_i, ctr, aux)$ and $\varphi^{i,j} = H_1(pk_j^{sk_i}, ctr,$ $aux) $. Note that \( \varphi^{i,j} = \varphi^{j,i} \).


\item
\(
\textbf{DEFE.FunKeyAgg}\bigl(\{skf_{i,\boldsymbol{y}}\}^{I}_{i=1}) \rightarrow skf_{\boldsymbol{y}}
\): It is executed by the decryptor. It inputs partial functional keys \( skf_{i,\boldsymbol{y}} \) and derives the final functional key:
\begin{equation}
skf_{\boldsymbol{y}} = \sum\nolimits_{i=1}^{I}skf_{i,\boldsymbol{y}} = \sum\nolimits_{i=1}^{I}r_i^{ctr}\cdot y_i \in \mathbb{Z}.
\end{equation}

\item
	\(
	\textbf{DEFE.AggDec}(skf_{\boldsymbol{y}}, \{ct_i\}^{I}_{i=1}) \rightarrow \langle \boldsymbol{x'}, \boldsymbol{y} \rangle
	\): It is executed by the decryptor. It first computes
	\begin{equation}
    CT_{\boldsymbol{x'}}= \left( \prod\nolimits_{i=1}^{I} ct_i^{y_i} \right) \cdot g^{-skf_{\boldsymbol{y}}} \mod N^2.
    \end{equation}
	Then, it outputs $\log_{(1+N)}(CT_{\boldsymbol{x'}}) = \frac{CT_{\boldsymbol{x'}} - 1 \mod N^2}{N} = \langle \boldsymbol{x'}, \boldsymbol{y} \rangle.$
	
\item\(	\textbf{DEFE.UsrDec}(\langle \boldsymbol{x'}, \boldsymbol{y} \rangle,\{pk_{i}\}^I_{i=1}, \boldsymbol{y}, ctr) \rightarrow \langle \boldsymbol{x}, \boldsymbol{y} \rangle	\):
	It is executed by $I$ encryptors. During FL processes, each encryptor maintains a \( I \)-dimensional noise list \( \texttt{$L_t$} = [\eta_{t,i}]^I_{i=1} \) for each training round $t$. Based on this, each encryptor can obtain the true inner product value: $\langle \boldsymbol{x}, \boldsymbol{y} \rangle = \langle \boldsymbol{x'}, \boldsymbol{y} \rangle - \sum_{i=1}^{I}\eta_{t,i}\cdot y_i.$
    
	
\end{itemize}

\section{System Design}

\begin{figure*}[t]
\centering
\includegraphics [width=0.8\textwidth]{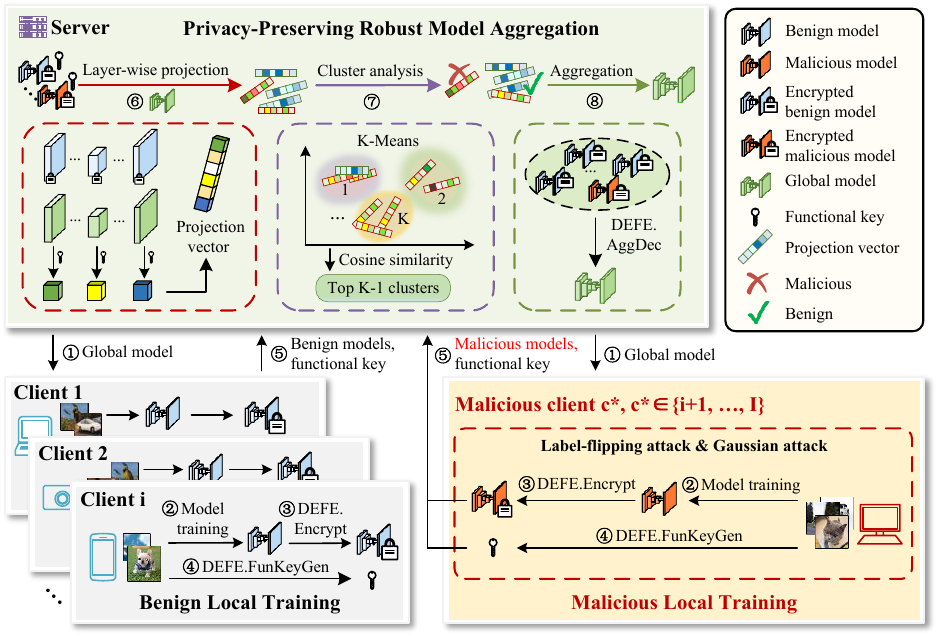}
\caption{The workflow of SRFed.}
\label{SRFed_workflow}
\end{figure*}

\subsection{High-Level Description of SRFed}
The workflow of SRFed is illustrated in Figure \ref{SRFed_workflow}. Specifically, SRFed iteratively performs the following three
steps: 1) \textit{Initialization}. The server initializes the global model $W_0$ and distributes it to all clients (step \ding{172}). 2) \textit{Local training}. In the $t$-th training iteration, each client $C_i$ receives the global model $W_t$ from the server and performs local training on its private dataset to obtain the local model $W_t^i$ (step \ding{173}). To protect model privacy, $C_i$ encrypts the local model and gets the encrypted model $E_t^i$ (step \ding{174}). Then, $C_i$ generates the functional key $skf_t^i$ for model detection (step \ding{175}), and uploads the encrypted model and the functional key to the server (step \ding{176}). 3) \textit{Privacy-preserving robust model aggregation}. Upon receiving all encrypted local models $\{E_t^i\}_{i=1,...,I}$, the server computes a layer-wise projection vector $V _t^i$ for each model based on the global model $W_t$ (step \ding{177}). The server then performs clustering analysis on $\{V_t^i\}_{i=1,...,I}$ to filter malicious models and identify benign models (step \ding{178}). Finally, the server aggregates these benign clients to update the global model $W_{t+1}$ (step \ding{179}).


\subsection{Construction of SRFed}
\subsubsection{Initialization}
    In this phase, the server first executes the 
	$\textbf{DEFE.Setup}(1^\lambda, X,$ $Y)$ algorithm  to generate the public parameters $pp$, which are then made publicly available. Each client \( C_i \) (\( i \in [1,I] \)) subsequently generates its key pair $(pk_i, sk_i)$ by executing the \(
	\textbf{DEFE.KeyGen}(1^\lambda, N, g) 
    \) algorithm. Finally, the server distributes the initial global model $W_0$ to all clients.
\subsubsection{Local Training}
The local training phase consists of three components: model training, model encryption, and functional key generation.
\textit{1) Model Training:} 
In the $t$-th ($t \in [1,T]$) training round, once receiving the global model \( W_t \), each client \( C_i \) utilizes its local dataset \( D_i \) to update \( W_t \) and obtains model update \( W_t^i \). For benign clients, they minimize their local objective function \( L_i \) to obtain \( W_t^i \), i.e.,
\begin{equation}
W_t^i = \arg\min_{W_t} L_i(W_t, D_i).
\label{fine-tune}
\end{equation}
Malicious clients execute distinct update strategies based on their attack method. Specifically, to perform a Gaussian attack, malicious clients first conduct normal local training according to Equation~\eqref{fine-tune} to obtain the benign model \( W_t^i \), and subsequently inject Gaussian noise \( noise_t \) into it to produce a poisoned model $(W_t^i)^*$, i.e.,
\begin{equation}
(W_t^i)^* = W_t^i + noise_t.
\end{equation}
In addition, to launch a label-flipping attack, malicious clients first poison their training datasets by flipping all samples labeled as \( l_{\text{src}} \) to a target class \( l_{\text{tar}} \). They then perform local training on the poisoned dataset \( D_i^*(l_{\text{src}} \rightarrow l_{\text{tar}}) \) to derive the poisoned model $(W_t^i)^*$, i.e.,
\begin{equation}
(W_t^i)^* = \arg\min_{W_t^i} L_i\left(W_t^i, D_i^*(l_{\text{src}} \rightarrow l_{\text{tar}})\right).
\label{eq:malicious_model_update}
\end{equation}

\textit{2) Model Encryption:}
To protect the privacy information of the local model, the clients exploit the DEFE scheme to encrypt their local models. Specifically, the clients first parse the local model as \( W_t^i = [W_t^{(i,1)}, \dots, W_t^{(i,l)}, \dots, W_t^{(i,L)}] \), 
where \( W_t^{(i,l)} \) denotes the parameter set of the \( l \)-th model layer. For each parameter element \( W_t^{(i,l)}[\varepsilon] \) in \( W_t^{(i,l)} \), client \( C_i \) executes the $\textbf{DEFE.Encrypt}(pk_i, sk_i, ctr, W_t^{(i,l)}[\varepsilon])$ algorithm to generate the encrypted parameter \( E_t^{(i,l)}[\varepsilon] \), i.e.,
\begin{equation}\label{encrypt}
    E_t^{(i,l)}[\varepsilon] = (1+N)^{{W_t^{(i,l)}}'[\varepsilon]} \cdot g^{{r_i}^{ctr}} \bmod N^2,
\end{equation}
where ${W_t^{(i,l)}}'[\varepsilon]$ is the parameter \( W_t^{(i,l)}[\varepsilon] \) perturbed by a noise term \( \eta \), i.e., 
\begin{equation}
    {W_t^{(i,l)}}'[\varepsilon] = W_t^{(i,l)}[\varepsilon] + \eta.
\end{equation}
In SRFed, the noise \( \eta \) remains fixed for all clients during the first \( T-1 \) training rounds, and is set to $0$ in the final training round. Specifically, \( \eta \) is an integer generated by client \( C_1 \) using the hash function \( H_1 \) during the first iteration, and is subsequently broadcast to all other clients for model encryption. The magnitude of \( \eta \) is constrained by:
\begin{equation}\label{eta}
    m_l \cdot \eta^2 \ll \lVert W_0^{(l)} \rVert^2, 
    \quad \forall l \in [1, L],
\end{equation}
where \( W_0^{(l)} \) denotes the \( l \)-th layer model parameters of the initial global model \( W_0 \), and \( m_l \) represents the number of parameters in \( W_0^{(l)} \). Finally, the client $C_i$ obtains the encrypted local model $E_t^i = [E_t^{(i,1)}, \dots, E_t^{(i,l)}, \dots, E_t^{(i,L)}]$.



\textit{3) Functional Key Generation:} Each client \( C_i \) generates a functional key vector $skf_t^i = [skf_t^{(i,1)}, skf_t^{(i,2)}, \dots, skf_t^{(i,L)}]$ to enable the server to perform model detection on encrypted models. Specifically, for the $l$-th layer of the global model \( W_t \), client \( C_i \) executes the $\textbf{DEFE.FunKeyGen}(pk_i, sk_i, ctr, W_t^{(l)}[\varepsilon])$ algorithm to generate element-wise functional keys, i.e.,
\begin{equation}
skf_t^{(i,l,\varepsilon)} = r_i^{ctr} \cdot W_t^{(l)}[\varepsilon] = H_1(sk_i, ctr, \varepsilon) \cdot W_t^{(l)}[\varepsilon],
\end{equation}
where \( W_t^{(l)}[\varepsilon] \) denotes the \(\varepsilon\)-th parameter of \( W_t^{(l)} \). After processing all elements in \( W_t^{(l)} \), client \( C_i \) obtains the set of element-wise functional keys $\{\{skf_t^{(i,l,\varepsilon)} \}_{\varepsilon=1}^{|W_t^{(l)}|}\}_{l=1}^{L}$. Subsequently, the layer-level functional key is derived by aggregating the element-wise keys using the 
\(
\textbf{DEFE.FunKeyAgg}(\{ skf_t^{(i,l,\varepsilon)} \}_{\varepsilon=1}^{|W_t^{(l)}|})
\) algorithm, i.e.,
\begin{equation}
skf_t^{(i,l)} = \sum\nolimits_{\varepsilon=1}^{|W_t^{(l)}|} skf_t^{(i,l,\varepsilon)}.
\end{equation}
This procedure is repeated for all layers to obtain the complete functional key vector \( skf_t^i \) for client \( C_i \). Finally, each client uploads the encrypted local model \( E_t^i \) and the corresponding functional key \( skf_t^i \) to the server for subsequent model detection and aggregation.

\subsubsection{Privacy-Preserving Robust Model Aggregation} 

To resist poisoning attacks from malicious clients, SRFed implements a privacy-preserving robust aggregation strategy, which enables secure detection and aggregation of encrypted local models without exposing private information. As illustrated in Figure \ref{SRFed_workflow}, the proposed method performs layer-wise projection and clustering analysis to identify abnormal updates and ensure reliable model aggregation. Specifically, in each training round, the local model $W_t^i$ and the global model $W_t$ are decomposed layer by layer. For each layer, the parameters are projected onto the corresponding layer of the global model, and clustering is performed on the projection vectors to detect anomalous models. After that, the server filters malicious models and aggregates the remaining benign updates. Unlike prior defenses that rely on global statistical similarity between model updates \cite{foolsgold, HE_shieldfl, Median_original}, our approach captures fine-grained parameter anomalies and conducts clustering analysis to achieve effective detection even under non-IID data distributions.


\textit{1) Model Detection}: Once receiving $E_t^i$ and $skf_t^i$ from client $C_i$, the server computes the projection $V_t^{(i,l)}$ of $W_t^{(i,l)'}$ onto $W_t^{(l)}$, i.e.,
 \begin{equation}
     V_t^{(i,l)} = \frac{\langle W_t^{(i,l)'}, W_t^{(l)} \rangle}{\lVert W_t^{(l)} \rVert_2}.
 \end{equation}
 Specifically, the server first executes the $\textbf{DEFE.AggDec}(skf_t^{(i,l)},$ $E_t^{(i,l)})$ algorithm, which effectively computes the inner product of $W_t^{(i,l)'}$ and $W_t^{(l)}$. This value is then normalized by $\lVert W_t^{(l)} \rVert_2$ to obtain
\begin{equation}
    V_t^{(i,l)} = \frac{\textbf{DEFE.AggDec}(skf_t^{(i,l)}, E_t^{(i,l)})}{\lVert W_t^{(l)} \rVert_2}.
\end{equation}
By iterating over all $L$ layers, the server obtains the layer-wise projection vector $V_t^i = [V_t^{(i,1)}, V_t^{(i,2)}, \dots, V_t^{(i,L)}]$ corresponding to client $C_i$. After computing projection vectors for all clients, the server clusters the set $\{V_t^i\}_{i=1}^I$ into $K$ clusters $\{\Omega_1, \Omega_2, \dots, \Omega_K\}$ using the K-Means algorithm. For each cluster $\Omega_k$, the centroid vector $\bar{V}_k$ is computed, and the average cosine similarity $\overline{cs}_k$ between all vectors in the cluster and $\bar{V}_k$ is calculated. Finally, the $K-1$ clusters with the largest average cosine similarities are identified as benign clusters, while the remaining cluster is considered potentially malicious.

\textit{3) Model Aggregation}: The server first maps the vectors in the selected $K-1$ clusters to their corresponding clients, generating a client list $L^t_{bc}$ and a weight vector $\gamma_t = (\gamma^1_t, \dots, \gamma^I_t)$, where
\begin{equation}
    \gamma^i_t = \begin{cases} 
1 & \text{if } C_i \in L^t_{bc}, \\ 
0 & \text{otherwise.}
\end{cases}
\end{equation}
The server then distributes $\gamma_t$ to all clients. Upon receiving $\gamma^i_t$, each client $C_i$ locally executes the $\textbf{DEFE.FunKeyGen}(pk_i, sk_i,$ $ctr, \gamma^i_t, aux)$ algorithm to compute the partial functional key $skf^{(i,\mathsf{Agg})}_t$ as
\begin{equation}
    skf^{(i,\mathsf{Agg})}_t = r_i^{ctr} y^i_t + \sum_{j=1}^{i-1} \varphi^{i,j} - \sum_{j=i+1}^{n} \varphi^{i,j}.
\end{equation}
Each client uploads $skf^{(i,\mathsf{Agg})}_t$ to the server. Subsequently, the server executes the $\textbf{DEFE.FunKeyAgg}\left( (skf^{(i,\mathsf{Agg})}_t)_{i=1}^I \right)$ to compute the aggregation key as 
\begin{equation}
    skf^{\mathsf{Agg}}_t = \sum_{i=1}^{I} skf^{(i,\mathsf{Agg})}_t.
\end{equation}
Finally, the server performs layer-wise aggregation to obtain the noise-perturbed global model $W_{t+1}'$ as
\begin{align}
    W_{t+1}^{(l)'}[\varepsilon]  
        &= \frac{\text{DEFE.AggDec}\left( skf^{\mathsf{Agg}}_t, \{E_t^{(i,l)}[\varepsilon]\}_{i=1}^I \right)}{n} \\ 
        &= \frac{\langle (W_{t}^{(1,l)'}[\varepsilon], \dots, W_{t}^{(I,l)'}[\varepsilon]), \gamma_t \rangle}{n} \notag
\end{align}
where $n$ denotes the number of 1-valued elements in $L^t_{bc}$. The server then distributes $W_{t+1}'$ to all clients for the $(t+1)$-th training round. Note that $W_{t+1}'$ is noise-perturbed, the clients must remove the perturbation to recover the accurate global model $W_{t+1}$. They will execute the $\textbf{DEFE.UsrDec}(W_{t+1}^{(l)'}[\varepsilon], \gamma_t)$ algorithm to restore the true global model parameter $W_{t+1}^{(l)}[\varepsilon] = W_{t+1}^{(l)'}[\varepsilon] - \eta$.

\section{Analysis}\label{sec:analysis}

\subsection{Confidentiality}

In this subsection, we demonstrate that our DEFE-based SRFed framework guarantees the confidentiality of clients’ local models under the Honest-but-Curious (HBCS) security setting. 
\begin{definition}[Decisional Composite Residuosity (DCR) Assumption \cite{DCR}]\label{DCR}
    Selecting safe primes $p = 2p'+1$ and $q = 2q'+1$ with $p',q'>2^{l(\lambda)}$, where $l$ is a polynomial in security parameter $\lambda$, let $N=pq$. The Decision Composite Residuosity (DCR) assumption states that, for any Probability Polynomial Time (PPT) adversary $\mathcal{A}$  and any distinct inputs $x_0, x_1$, the following holds: $$|Pr_{win}(\mathcal{A}, (1+N)^{x_0}\cdot g^{r_i^{ctr} }\mod N^2,x_0,x_1) - \frac{1}{2}| = negl(\lambda),$$
where $Pr_{win}$ denotes the probability that the adversary $\mathcal{A}$ distinguishes ciphertexts.
\end{definition}
\begin{definition}[Honest but Curious Security (HBCS)]\label{HBCS} Consider the following game between an adversary $\mathcal{A}$ and a PPT simulator $\mathcal{A}^*$, a protocol $\Pi$ is secure if the real-world view $\textbf{REAL}_{\mathcal{A}}^{\Pi}$ of $\mathcal{A}$ is computationally indistinguishable from the ideal-world view $\textbf{IDEAL}_{\mathcal{A}^*}^{\mathcal{F_{\Pi}}}$ of $\mathcal{A}^*$, i.e., for all inputs $\hat{x}$ and intermediate results $\hat{y}$ from participants, it holds $
\textbf{REAL}_{\mathcal{A}}^{\Pi}(\lambda, \hat{x}, \hat{y})  \overset{c}{\equiv} \textbf{IDEAL}_{\mathcal{A}^*}^{\mathcal{F_\Pi}}(\lambda, \hat{x}, \hat{y})$,
where $\overset{c}{\equiv}$ denotes computationally indistinguishable.
\end{definition}



\begin{theorem}
   SRFed achieves Honest but Curious Security under the DCR assumption, which means that for all inputs $\{C_t^i,  {skf}_{t}^{i} \}_{i=1,...,I}$ and intermediate results ($V_t^i$, $W_{t+1}'$, $W_T$), SRFed holds: 
   
   $\textbf{REAL}_{\mathcal{A}}^{SRFed}(C_t^i,  {skf}_{t}^{i},skf_t^\mathsf{Agg},V_t^i,W_{t+1}',W_T) \overset{c}{\equiv}$ 
   
   $\textbf{IDEAL}_{\mathcal{A}^*}^{\mathcal{F}_{SRFed}}(C_t^i,  {skf}_{t}^{i},skf_t^\mathsf{Agg},V_t^i,W_{t+1}',W_T)$. 
\end{theorem}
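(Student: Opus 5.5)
The plan is a standard simulation argument with a hybrid sequence. Following Definition~\ref{HBCS}, I will build a PPT simulator $\mathcal{A}^*$ that sees only the ideal functionality's output, namely the noise-perturbed intermediate values ($V_t^i$ and $W_{t+1}'$) and the final model $W_T$. From these it must produce a view indistinguishable from the server's real view $(E_t^i, skf_t^i, skf_t^{\mathsf{Agg}}, V_t^i, W_{t+1}', W_T)$.

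\textbf{The simulator.} The simulator runs $\textbf{DEFE.Setup}$ and $\textbf{DEFE.KeyGen}$ itself, so it knows simulated secret keys $s_i$ and hence the hash values $r_i^{ctr}$ and $\varphi^{i,j}$.
\begin{itemize}
\item It replaces every ciphertext $E_t^{(i,l)}[\varepsilon]$ with an encryption of $0$ (or of a dummy value), formed honestly as $(1+N)^{0}\cdot g^{r_i^{ctr}}$.
\item It then computes the functional keys $skf_t^{(i,l)}$ and $skf_t^{(i,\mathsf{Agg})}$ honestly from $W_t$ and $\gamma_t$. Both are public to the server, so they are programmable.
\end{itemize}

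\textbf{Consistency of decryptions.} The simulated view must be consistent: $\textbf{DEFE.AggDec}$ applied to the simulated ciphertexts and keys must return exactly the given $V_t^{(i,l)}\lVert W_t^{(l)}\rVert_2$ and $n\,W_{t+1}^{(l)'}[\varepsilon]$. To ensure this, I will let the simulator choose dummy plaintexts $\tilde{x}$ that satisfy the required linear constraints, for instance by placing the whole prescribed inner product on a single coordinate. This is possible because each constraint fixes only one linear functional of the plaintext.

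\textbf{Hybrid sequence.} The argument proceeds through the following hybrids.
\begin{itemize}
\item \textbf{Hybrid $H_0$} is the real view.
\item \textbf{Hybrids $H_{1,k}$} replace the ciphertexts of client $k$ one at a time (over all layers and rounds) with encryptions of the simulator's $\tilde{x}$. The functional keys are unchanged, since they depend only on $r_i^{ctr}$, $W_t$ and $\gamma_t$ and not on the plaintext.
\item Each step is justified by a reduction to the DCR assumption (Definition~\ref{DCR}). Given a challenge $(1+N)^{x_b}g^{r}$ with $x_0 = W'$ and $x_1=\tilde{x}$, the reduction embeds it as client $k$'s ciphertext and samples all other keys itself.
\item Because $\tilde{x}$ and $W'$ have equal projections onto the queried key vectors, the decryption outputs the server observes are identical in both worlds. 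So a distinguisher between $H_{1,k-1}$ and $H_{1,k}$ yields a DCR adversary.
\item I will also use the masks $\varphi^{i,j}$, which cancel in $\sum_i skf_t^{(i,\mathsf{Agg})}$. With $H_1$ modeled as a random oracle and the Diffie--Hellman-style value $pk_j^{sk_i}$ hidden, each individual partial key $skf_t^{(i,\mathsf{Agg})}$ is uniformly masked. The server therefore learns only $skf_t^{\mathsf{Agg}}$, which the simulator reproduces.
\end{itemize}

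\textbf{The intermediate results.} The final hybrid is exactly the ideal view. It remains to argue that the intermediate results themselves reveal nothing beyond what the ideal functionality allows. The values $V_t^i$ and $W_{t+1}'$ are shifted by the common noise $\eta$, which is unknown to the server, and $W_T$ is the intended output.

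\textbf{Main obstacle.} I expect the hardest step to be the consistency requirement in the DCR hybrids. A single secret $r_i^{ctr}$ is reused across all parameters $\varepsilon$, and the functional keys reveal the linear combination $r_i^{ctr}\sum_\varepsilon W_t^{(l)}[\varepsilon]$. The challenge ciphertexts therefore cannot be treated as independent, and the reduction must embed a single challenge into many correlated ciphertexts. My first attempt will be to use the random self-reducibility of DCR: rerandomize the challenge by multiplying by $(1+N)^{\delta}$ with known shifts $\delta$, producing all coordinates from one challenge while keeping $r$ common.

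If that fails, I will instead strengthen the argument by modeling $H_1(sk_i,ctr,\varepsilon)$ as giving an independent $r$ per coordinate, as the key-generation equation suggests. The reduction can then proceed coordinate by coordinate through a further hybrid over $\varepsilon$, with only a polynomial loss.
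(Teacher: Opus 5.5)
Your proposal does not yet give a valid proof. The DCR reduction in the hybrids $H_{1,k}$ fails, and the obstacle you flag at the end is not a technicality but the central issue. Besides client $k$'s ciphertexts, the server's view contains client $k$'s functional keys $skf_t^{(k,l)}=\sum_\varepsilon r_{k,\varepsilon}W_t^{(l)}[\varepsilon]$ and $skf_t^{(k,\mathsf{Agg})}$. These are integer linear functions of exactly the randomness that the challenge $(1+N)^{x_b}g^{r}$ hides. You say the keys are unchanged across the hybrid, which is true, but the reduction does not know $r$ and so cannot produce them. Definition~\ref{DCR} also gives nothing once information about $r$ is revealed.

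Under the reading your first attempt adopts, this is an attack, not just a missing step. Eq.~\eqref{encrypt} uses one $r_i^{ctr}$ for all coordinates, which is also what a single aggregation key $\sum_i r_i^{ctr}\gamma_t^i$ needs to decrypt every coordinate. Then $skf_t^{(i,l)}=r_i^{ctr}\sum_\varepsilon W_t^{(l)}[\varepsilon]$. The server must know the key vector to run $\textbf{DEFE.AggDec}$ at all, so it gets $r_i^{ctr}$ by division. It then reads off every parameter from $E_t^{(i,l)}[\varepsilon]\,g^{-r_i^{ctr}}=(1+N)^{W_t^{(i,l)'}[\varepsilon]}$. A distinguisher that knows the inputs separates real from ideal with advantage close to $1$, so no rerandomization of a DCR challenge by shifts $(1+N)^{\delta}$ can rescue this attempt. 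The paper's own proof does not handle this either: its Hyb$_2$ simply asserts that the keys ``reveal no secret information''.

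Your fallback, independent $r_{i,\varepsilon}=H_1(sk_i,ctr,\varepsilon)$, breaks correctness of the aggregation as written. A single scalar $skf_t^{\mathsf{Agg}}$ cannot cancel $g^{\sum_i\gamma_t^i r_{i,\varepsilon}}$ for every $\varepsilon$, so you would first have to change the protocol to per-coordinate aggregation keys. Even then, hybrids over single clients or single coordinates do not work, for two reasons:
\begin{itemize}
\item The challenged coordinate's randomness still appears in $skf_t^{(k,l)}$.
\item Changing one client's plaintext changes either its projections or the per-coordinate aggregates $\sum_i\gamma_t^i(\cdot)$, which couple all benign clients. Your intermediate hybrids are therefore inconsistent with what the server decrypts. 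Your dummy-plaintext recipe, which places each inner product on one coordinate, ignores the same aggregate constraints.
\end{itemize}

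The missing idea is the Agrawal--Libert--Stehl\'e-style embedding used for DCR-based inner-product FE:
\begin{itemize}
\item Let $\mathcal{Y}$ be the span of all key vectors the server receives, taken jointly over clients and coordinates.
\item Choose the simulated plaintexts so that real and simulated differ by some $v\in\mathcal{Y}^{\perp}$.
\item Generate all masks from one challenge $Z$ as $g^{u}Z^{v}$, with $u$ known to the reduction and drawn from a range wide enough to smudge $\rho v$, where $Z=g^{\rho}$ or $Z=g^{\rho}(1+N)^{a}$.
\end{itemize}
Then every key equals $\langle u,y\rangle$ and is computable by the reduction. A non-residue $Z$ turns both worlds into encryptions of $\tilde{x}+a'v$ with $a'$ uniform modulo $N$.

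This route has costs beyond what you assumed. It needs a hash range much larger than $N$. Your handling of the masks $\varphi^{i,j}$ also relies on a random oracle and a Diffie--Hellman-type assumption in $\mathbb{Z}_{N^2}^*$. Both go beyond the DCR-only claim in the statement.
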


\begin{proof}
    
To prove the security of SRFed, we just need to prove the confidentiality of the privacy-preserving defense strategy, since only it involves the computation of private data by unauthorized entities (i.e., the server). For the curious server, $\textbf{REAL}_{\mathcal{A}}^{SRFed}$ contains intermediate parameters and encrypted local models $\{C_t^i\}_{i=1,...,I}$ collected from each client during the execution of SRFed protocols. Besides, we construct a PPT simulator $\mathcal{A}^*$ to execute $\mathcal{F}_{SRFed}$, which simulates each process of the privacy-preserving defensive aggregation strategy. The detailed proof is described below.

\textbf{Hyb}$_1$ We initialize a series of random variables whose distributions are indistinguishable from $\textbf{REAL}_{\mathcal{A}}^{SRFed}$ during the real protocol execution.

\textbf{Hyb}$_2$  In this hybrid, we change the behavior of simulated client $C_i$ $(i \in [1,I])$. $C_i$ takes the selected random vector of random variables $\Theta_{W}$ as the local model $W_t^{i'}$, and uses the DEFE.Encrypt algorithm to encrypt $W_t^{i'}$. As only the original contents of ciphertexts have changed, it guarantees that the server cannot distinguish the view of $\Theta_{W}$ from the view of original $W_t^{i'}$ according to the Definition \eqref{DCR}. Then, $C_i$ uses the DEFE.FunKeyGen algorithm to generate the key vector $skf_t^i = [skf_t^{(i,1)},skf_t^{(i,2)},\dots,skf_t^{(i,L)}].$ Note that each component of $skf_t^i$ essentially is the inner product result, thus revealing no secret information to the server.

\textbf{Hyb}$_3$  In this hybrid, we change the input of the protocol of \textit{Secure Model Aggregation} executed by the server with encrypted random variables instead of real encrypted model parameters. The server gets the plaintexts vector $V_t^i = [V_t^{(i,1)},V_t^{(i,2)},\dots,V_t^{(i,L)}]$ corresponding to $C_i$, which is the layer-wise projection of $\Theta_{W}$ and $W_{t}$. As the inputs $\Theta_{W}$ follow the same distribution as the real $W^{i'}_t$, the server cannot distinguish the $V_t^i$ between ideal world and real world without knowing further information about the inputs. Then, the server performs clustering based on $\{V_t^i\}^I_{i=1}$ to obtain $\{\Omega_k\}^K_{k=1}$. Subsequently, it computes the average cosine similarity of all vectors within each cluster to their centroid, and assigns client weights accordingly. Since $\{V_t^i\}^I_{i=1}$ is indistinguishable between the ideal world and the real world, the intermediate variables calculated via $\{V_t^i\}^I_{i=1}$ above also inherit this indistinguishability. Hence, this hybrid is indistinguishable from the previous one.

\textbf{Hyb}$_4$ In this hybrid, the aggregated model $W_{t+1}'$ is computed by the DEFE.AggDec algorithm. $\mathcal{A}^*$ holds the view $\textbf{IDEAL}_{\mathcal{A}^*}^{\mathcal{F}_{SRFed}}$$=$$(C_t^i,  {skf}_{t}^{i},skf_t^\mathsf{Agg},V_t^i,W_{t+1}',W_T)$, where $skf_t^{\mathsf{Agg}}$ is obtained by the interaction of non-colluding clients and server, the full security property of DEFE and the non-colluding setting ensure the security of $skf_t^{\mathsf{Agg}}$. Among the elements of intermediate computation, the local model $W_t^{i'}$ is encrypted, which is consistent with the previous hybrid. Throughout the $T$-round iterative process, the server obtains the noise-perturbed aggregated model $W_{t+1}' = W_{t+1} + \eta $ via the secure model aggregation when $0\le t <T-1$. Thus, the server cannot infer the real $W_{t+1}$, and cannot distinguish the $W_{t+1}'$ between ideal world and real world. When $t=T-1$, since the distribution of $\Theta_{W}$ remains identical to that of $W_T$, the probability that the server can distinguish the final averaged aggregated model $W_{T}$ is negligible. Hence, this hybrid is indistinguishable from the previous one.

\textbf{Hyb}$_5$ When $0 \le t<T$, all clients further execute the DEFE.UsrDec algorithm to restore the $W_{t+1}$. This process is independent of the server, hence this hybrid is indistinguishable from the previous one.

The argument above proves that the output of $\textbf{IDEAL}_{\mathcal{A}^*}^{\mathcal{F}_{SRFed}}$ is indistinguishable from the output of $\textbf{REAL}_{\mathcal{A}}^{SRFed}$. Thus, it proves that SRFed guarantees HBCS.

\end{proof} 

\subsection{Robustness}


 
To theoretically analyze the robustness of SRFed against poisoning attacks, we first prove the following theorem.
\begin{theorem}
When the noise perturbation \( \eta \) satisfies the constraint in~\eqref{eta}, the clustering results of SRFed over all \( T \) iterations remain approximately equivalent to those obtained using the original local models $\{W_t^i\}_{1, 2, ..., I}$.

\end{theorem}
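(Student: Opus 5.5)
We compare the projection vectors the server actually clusters with the ideal ones. Write $\mathbf{1}_l$ for the all-ones vector of length $m_l$. Since $W_t^{(i,l)'} = W_t^{(i,l)} + \eta\mathbf{1}_l$ with the same scalar $\eta$ for every client, the computed projection splits as
\begin{equation*}
V_t^{(i,l)} = \frac{\langle W_t^{(i,l)}, W_t^{(l)}\rangle}{\lVert W_t^{(l)}\rVert_2} + \eta\,\frac{\langle \mathbf{1}_l, W_t^{(l)}\rangle}{\lVert W_t^{(l)}\rVert_2} =: \tilde V_t^{(i,l)} + \delta_t^{(l)}.
\end{equation*}
Here $\tilde V_t^{(i,l)}$ is the projection of the original local model. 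The key structural fact is that $\delta_t^{(l)}$ does not depend on $i$. Hence $V_t^i = \tilde V_t^i + \boldsymbol{\delta}_t$ for every client, with a common shift $\boldsymbol{\delta}_t=(\delta_t^{(1)},\dots,\delta_t^{(L)})$.

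First step: K-Means is translation-equivariant. The objective $\sum_k\sum_{i\in\Omega_k}\lVert V_t^i-\bar V_k\rVert^2$ depends only on pairwise differences, and centroids shift by exactly $\boldsymbol{\delta}_t$. So with a consistent initialization (shifted by $\boldsymbol{\delta}_t$), Lloyd's iterations produce exactly the same partition $\{\Omega_k\}$ on $\{V_t^i\}$ as on $\{\tilde V_t^i\}$. The partition itself is therefore unaffected by $\eta$.

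Second step: show the cosine-similarity ranking of clusters is approximately preserved. This ranking is \emph{not} translation-invariant, so it needs a size argument. By Cauchy--Schwarz, $|\delta_t^{(l)}| \le |\eta|\sqrt{m_l}$. Constraint~\eqref{eta} gives $m_l\eta^2 \ll \lVert W_0^{(l)}\rVert^2$, so $|\delta_t^{(l)}| \ll \lVert W_0^{(l)}\rVert$. The unperturbed projections $\tilde V_t^{(i,l)}$ are of order $\lVert W_t^{(l)}\rVert$, since local models stay close to the global model. Hence $\lVert\boldsymbol{\delta}_t\rVert/\lVert \tilde V_t^i\rVert$ is small, and by continuity of cosine similarity away from the origin each $\overline{cs}_k$ changes by $O(\lVert\boldsymbol{\delta}_t\rVert/\min_i\lVert\tilde V_t^i\rVert)$. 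Unless two clusters' scores are nearly tied, the $K-1$ clusters selected as benign are the same, and so is $\gamma_t$.

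Third step: induction over rounds. Because $\eta$ is removed by DEFE.UsrDec, the clients recover the exact $W_{t+1}$ given the same $\gamma_t$. The global model trajectory thus coincides with the unperturbed one, and the argument repeats for every $t<T$. In the final round $\eta=0$, so the equivalence is exact there.

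The main obstacle is the second step, specifically transferring the constraint stated for $W_0^{(l)}$ to later rounds. This needs a lower bound $\lVert W_t^{(l)}\rVert \gtrsim \lVert W_0^{(l)}\rVert$, or at least that layer norms do not collapse during training. I would make this an explicit mild assumption justified by bounded training dynamics. I would also state the conclusion as ``approximately'', excluding near-ties in the cosine scores, rather than attempt an exact guarantee.
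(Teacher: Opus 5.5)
Your first step has a gap. The claim that $\delta_t^{(l)}$ is independent of $i$ assumes that in every round the server projects onto, and normalizes by, the noise-free global model $W_t^{(l)}$. That is true only at $t=0$, where $W_0$ is the server's own plaintext initialization; there your translation-equivariance argument is exactly the paper's first case. After the first noisy aggregation the server holds only $W_t' = W_t + \eta\mathbf{1}$ and never learns $\eta$, and the confidentiality proof rests on this. In particular it could not even evaluate the $\lVert W_t^{(l)}\rVert_2$ in the Model Detection formula you took literally. So for $0<t<T$ the paper's proof computes
\[
V_t^{(i,l)}=\frac{\langle W_t^{(i,l)},W_t^{(l)}\rangle+\eta\langle W_t^{(i,l)},\mathbf{1}_l\rangle+\eta\langle\mathbf{1}_l,W_t^{(l)}\rangle+m_l\eta^2}{\lVert W_t^{(l)}+\eta\mathbf{1}_l\rVert_2}.
\]
The term $\eta\langle W_t^{(i,l)},\mathbf{1}_l\rangle$ varies with the client. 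The denominator also multiplies the $l$-th coordinate of $\tilde V_t^i$ by $\rho_t^{(l)} = \lVert W_t^{(l)}\rVert_2/\lVert W_t^{(l)}+\eta\mathbf{1}_l\rVert_2$, a factor that differs across layers. Hence $V_t^i$ is not a translate of $\tilde V_t^i$, and K-Means equivariance no longer gives an identical partition.

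For the same reason your claim that the final round is exact fails. There the local models are encrypted with $\eta=0$, but the global model being projected onto is still the perturbed output of the previous aggregation. So $V_T^{(i,l)}$ keeps the client-dependent term $\eta\langle W_T^{(i,l)},\mathbf{1}_l\rangle/\lVert W_T^{(l)}+\eta\mathbf{1}_l\rVert_2$; this is the paper's third case.

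The fix is to let the size argument of your second step govern the partition too, not only the cosine ranking. By Cauchy--Schwarz the client-dependent term is at most $|\eta|\sqrt{m_l}\,\lVert W_t^{(i,l)}\rVert_2/\lVert W_t^{(l)}+\eta\mathbf{1}_l\rVert_2$, and $\rho_t^{(l)} = 1+O(|\eta|\sqrt{m_l}/\lVert W_t^{(l)}\rVert_2)$. Both are small under \eqref{eta} combined with your non-collapse assumption on $\lVert W_t^{(l)}\rVert_2$. You then need the Lloyd assignment to be stable under perturbations smaller than the assignment margin. That margin is measured against the separation between clients' projections, not against $\lVert\tilde V_t^i\rVert$; it is the analogue of your near-tie exclusion.

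This gives a quantitative version of the paper's three-case argument ($t=0$: exact common shift; $0<t<T$: local and global models both perturbed; final round: only the global model perturbed). The paper itself settles those cases with an informal appeal to \eqref{eta}. Two of your remarks are real points the paper passes over and are worth keeping: the cosine-similarity ranking is not translation-invariant even at $t=0$, and \eqref{eta} is stated for $W_0$ rather than $W_t$.
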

 \begin{proof}
Let $\overline{\eta}$ be a vector of the same shape as $W_t^{(i,l)'}$ with all entries equal to $\eta$, and ${V_t^i}_{real}$ be the real projection vector derived from the noise-free models. We discuss the following three cases.

\ding{172}
 $t=0:$ For any $i\in[1,I]$, $W_0^{(i,l)'} = W_0^{(i,l)} + \overline{\eta}$, we have
\begin{equation}
\begin{aligned}[t]  
V_0^i &= \frac{\langle W_0^{(i,l)'}, W_0^{(l)}\rangle}{\lVert W_0^{(l)} \rVert_2} = \frac{\langle W_0^{(i,l)} + \overline{\eta}, W_0^{(l)}\rangle}{\lVert W_0^{(l)} \rVert_2} = {V_0^i}_{real} +  \frac{\langle \overline{\eta}, W_0^{(l)}\rangle}{\lVert W_0^{(l)} \rVert_2}.\\  
\end{aligned}
\end{equation}
Note that $\frac{\langle \overline{\eta}, W_0^{(l)}\rangle}{\lVert W_0^{(l)} \rVert_2}$ is identical for any client, the clustering result of $\{V_0^i\}^I_{i=1}$ is entirely equivalent to that of $\{{V_0^i}_{real}\}^I_{i=1}$ based on the underlying computation of K-Means.

\ding{173} 
 $0<t<T:$ For any $i\in[1,I]$, $W_t^{(i,l)'} = W_t^{(i,l)} + \overline{\eta}$. Correspondingly, $W_t^{(l)'} = W_t^{(l)} + \overline{\eta}$, and we have 
\begin{equation}
\begin{split}
V_t^i &= \frac{\langle W_t^{(i,l)'}, W_t^{(l)'}\rangle}{\lVert W_t^{(l)'} \rVert_2} \\
      &= \frac{\langle W_t^{(i,l)} , W_t^{(l)}\rangle+\langle W_t^{(i,l)} , \overline{\eta}\rangle+\langle  \overline{\eta}, W_t^{(l)}\rangle+\langle \overline{\eta},\overline{\eta}\rangle}{\sqrt[]{ \lVert W_t^{(l)} \rVert_2^2+2\langle W_t^{(l)},\overline{\eta}\rangle+\lVert\overline{\eta}\rVert_2^2}}.
\end{split}
\end{equation}
By combining the above equation with the constraint \eqref{eta}, $V_t^i$ is approximately equivalent to the real value of ${V_t^i}_{real}$.

\ding{174} 
$t=T:$ For any $i\in[1,I]$, $W_T^{(i,l)'} = W_T^{(i,l)}$. Correspondingly, $W_T^{(l)'} = W_T^{(l)} + \overline{\eta}$, and we have 
 \begin{equation}
\begin{aligned}[t]  
V_T^i &= \frac{\langle W_T^{(i,l)}, W_T^{(l)'}\rangle}{\lVert W_T^{(l)'} \rVert_2} = \frac{\langle W_T^{(i,l)} , W_T^{(l)}\rangle+\langle W_T^{(i,l)} , \overline{\eta}\rangle}{\sqrt[]{ \lVert W_T^{(l)} \rVert_2^2+2\langle W_T^{(l)},\overline{\eta}\rangle+\lVert\overline{\eta}\rVert_2^2}}. \\  
\end{aligned}
\end{equation}
Similarly, by combining the above equation with the constraint \eqref{eta}, $V_T^i$ is approximately equivalent to the real value of ${V_T^i}_{real}$.

Therefore, across all iterations, the clustering results based on \( \{V_t^i\} \) closely approximate those derived from the original local models, confirming that the introduced perturbation does not affect model detection.
 \end{proof}



Then, we introduce a key assumption, which has been proved in \cite{pbfl,NIFE_2024_BSR_FL}. This assumption reveals the essential difference between malicious and benign models and serves as a core basis for subsequent robustness analysis.
\begin{assumption}\label{tau}
    An error term $\tau^{(t)}$ exists between the average malicious gradients $\mathbf{W}_t^{i*}$ and the average benign gradients $\mathbf{W}_t^i$ due to divergent training objectives. This is formally expressed as:
    \begin{equation}
        \sum_{C_i \in \mathcal{M}} \mathbf{W}_t^{i*} = \sum_{C_i \in \mathcal{B}} \mathbf{W}_t^i + \tau^{(t)}.
    \end{equation}
    The magnitude of $\tau^{(t)}$ exhibits a positive correlation with the number of iterative training rounds.
\end{assumption}

\begin{theorem}
    SRFed guarantees robustness to malicious clients in non-IID settings, provided that most clients are benign.
\end{theorem}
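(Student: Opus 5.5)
The argument reduces the encrypted setting to the plaintext one, then uses Assumption~\ref{tau} to separate malicious from benign projection vectors. By the preceding theorem, when $\eta$ satisfies \eqref{eta}, the clustering on the decrypted projections $\{V_t^i\}$ agrees, up to a negligible perturbation, with the clustering on the noise-free projections $\{{V_t^i}_{real}\}$. So it suffices to show robustness for the plaintext layer-wise projection and clustering rule.

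The plan has three steps.

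\textbf{Step 1: separation in the projection space.} Project the identity of Assumption~\ref{tau} layer by layer onto $W_t^{(l)}/\lVert W_t^{(l)}\rVert_2$. This gives
\begin{equation*}
\sum_{C_i\in\mathcal{M}} V_t^{(i,l)} = \sum_{C_i\in\mathcal{B}} V_t^{(i,l)} + \frac{\langle \tau^{(t),(l)}, W_t^{(l)}\rangle}{\lVert W_t^{(l)}\rVert_2}.
\end{equation*}
The two attacks need slightly different arguments.
\begin{itemize}
\item For the Gaussian attack, the injected noise is isotropic. Its projection onto each layer of $W_t$ is a scalar with variance proportional to the noise variance, independent across layers. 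The malicious vectors $V_t^i$ therefore scatter away from the benign ones, and their directions relative to the cluster centroid are essentially random.
\item For label flipping, $\tau^{(t)}$ is concentrated in the layers tied to $l_{src}$ and $l_{tar}$, chiefly the final classifier layer. The malicious $V_t^i$ therefore deviate coherently in those coordinates.
\end{itemize}

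\textbf{Step 2: why layer-wise projection survives Non-IID data.} Benign heterogeneity mostly moves parameters within each layer in directions nearly orthogonal to $W_t^{(l)}$, because local fine-tuning from a common $W_t$ keeps most of the mass aligned. Benign projections thus concentrate around $\lVert W_t^{(l)}\rVert_2$ plus a small spread. The attack-induced shift grows with $t$ by Assumption~\ref{tau}. I would formalize this with a margin condition: the benign spread per layer is at most $\delta$, and the malicious shift exceeds $c\,\delta$ for some constant $c>2$. Under this condition the malicious vectors are farther from the benign centroid than any benign vector.

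\textbf{Step 3: the clustering rule.} Argue that K-Means isolates the malicious points into the outlying cluster, or that they at least dominate it. Because benign clients are the majority, the benign points form the tight clusters with high average cosine similarity $\overline{cs}_k$. The malicious cluster is either scattered (Gaussian attack), which lowers $\overline{cs}_k$, or shifted in a coherent direction (label flipping). Hence the rule that discards the cluster with the smallest $\overline{cs}_k$ removes the malicious clients.

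Finally, bound the aggregated model. The average over $L^t_{bc}$ differs from the benign average by at most the contribution of any few misclassified points, which is $O(|\mathcal{M}|/n)$ times the bounded deviation. This yields the robustness claim.

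The main obstacle is Step 3. K-Means with a cosine-based cluster selection does not in general guarantee that the discarded cluster is exactly the malicious one, and the coherent label-flipping cluster may have high internal cosine similarity. My first attempt would impose an explicit separation margin from Assumption~\ref{tau}, invoked for $t$ large enough that $\lVert\tau^{(t)}\rVert$ dominates the benign spread. Under that margin I would use the standard fact that K-Means with well-separated groups recovers them. I would then treat the cosine criterion only as a tie-breaker, accepting that the result is an approximate, majority-dependent guarantee rather than an exact one.
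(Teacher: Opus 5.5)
Your outline is the same as the paper's: Assumption~\ref{tau} gives a divergence, layer-wise projection removes cross-layer interference, K-Means separates the groups, and the cosine filter drops the malicious cluster. However, the step you flag in Step~3 is a genuine gap, and your repair does not close it. SRFed discards the cluster with the smallest $\overline{cs}_k$, which measures the \emph{angular} spread of a cluster about its own centroid. It is scale-invariant, and since it averages only over the cluster's own members, the benign majority never enters it. Euclidean separation therefore says nothing about it. A coherently shifted label-flipping group is far away but angularly tight, so its $\overline{cs}_k$ is high. Worse, a singleton cluster has $\overline{cs}_k=1$ exactly. A lone attacker that K-Means isolates, which is exactly what your margin is meant to produce, is therefore never the discarded cluster; with $K=2$ the rule would discard the benign cluster and aggregate only the attacker. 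For the Gaussian attack the difficulty is reversed. A scattered group would have low $\overline{cs}_k$, but zero-mean noise spreads the malicious vectors on all sides of the benign group rather than forming one separate blob. Nothing forces K-Means to collect them into a single cluster, and since exactly one cluster is dropped, attackers in the other clusters survive. Demoting the cosine criterion to a tie-breaker and discarding the outlying cluster proves robustness of a different aggregation rule, not of SRFed. The paper's proof does not resolve this either: it simply asserts that, because of $\tau^{(t)}$, the groups fall into separate clusters and the malicious models have lower average cosine similarity to their centre. To reach the stated theorem you would have to adopt that property of the malicious projections as an explicit hypothesis, since it follows neither from Assumption~\ref{tau} nor from your margin.

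Two further steps are unsupported. First, the margin in Step~2 is not a consequence of Assumption~\ref{tau}. That assumption constrains only the aggregate $\tau^{(t)}$ of sums over $\mathcal{M}$ and $\mathcal{B}$ and says nothing about individual clients. Moreover, the component of $\tau^{(t)}$ along $W_t^{(l)}/\lVert W_t^{(l)}\rVert_2$ can be negligible however large $\lVert\tau^{(t)}\rVert$ is, because the detector keeps a single direction per layer. The margin, like your claim that benign drift is nearly orthogonal to $W_t^{(l)}$, is thus an added hypothesis close to the conclusion itself. Second, the closing $O(|\mathcal{M}|/n)$ bound needs a bounded deviation for misclassified models, and closeness in projection space cannot supply one, since any component orthogonal to $W_t^{(l)}$ is invisible to $V_t^i$. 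For the Gaussian attack the projected noise per layer is only $\mathcal{N}(0,0.5^2)$, while the noise itself has norm about $0.5\sqrt{m_l}$. An attacker that slips through therefore shifts the aggregate by an amount that grows with the layer size.
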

\begin{proof}

In the secure model aggregation phase of SRFed, the server collects the encrypted model $C_t^i$ and the corresponding key vectors $skf_t^i$ from each client, then computes the projection $V_t^{(i,l)} $ of  $W_t^{(i,l)}$ onto $W_t^{(l)}$, i.e., $\frac{\langle W_t^{(i,l)}, W_t^{(l)}\rangle}{\lVert W_t^{(l)} \rVert_2}$. By iterating over $L$ layers, the server obtains the layer-wise projection vector $V_t^i = [V_t^{(i,1)},V_t^{(i,2)},\dots,V_t^{(i,L)}]$ corresponding to $C_i$. Subsequently, the server performs clustering on the projection vectors $\{V_t^i\}^I_{i=1}$. Based on the Assumption \eqref{tau}, a non-negligible divergence $\tau^{(t)}$ emerges between benign and malicious local models, which grows with the number of iterations. Meanwhile, by independently projecting each layer's parameters onto the corresponding layer of the global model, our operation eliminates cross-layer interference. This ensures that malicious modifications confined to specific layers can be detected significantly more effectively. Therefore, our clustering approach successfully distinguishes between benign and malicious models by grouping them into separate clusters. Due to significant distribution divergence, malicious models exhibit a lower average cosine similarity to their cluster center. Consequently, our scheme filters out the cluster containing malicious models by computing average cosine similarity, ultimately achieving robust to malicious clients.
\end{proof}


\subsection{Efficiency}
\begin{theorem}
The computation and communication complexities of SRFed are $\mathcal{O}(T_{lt})+\mathcal{O}(\zeta T_{me-defe})+\mathcal{O}(T_{md-defe})+ \mathcal{O}(T_{ma-defe})$ and $\mathcal{O}(I\zeta|w_{defe}|)+\mathcal{O}(IL|w|)$, respectively.
\end{theorem}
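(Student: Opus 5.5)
The proof is an accounting argument: I will go through one training round of SRFed phase by phase, charge each step to a cost term in the statement, and then total them. The symbols are read as follows. $T_{lt}$ is the cost of local training. $T_{me-defe}$ is the cost of one \textbf{DEFE.Encrypt} call on a single parameter. $T_{md-defe}$ and $T_{ma-defe}$ are the costs of the model-detection and model-aggregation procedures built on \textbf{DEFE.AggDec}. $\zeta$ is the length of $W_t$, $|w_{defe}|$ is the bit length of a DEFE ciphertext element in $\mathbb{Z}_{N^2}$, and $|w|$ is the bit length of a plaintext or functional-key scalar.

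\textbf{Computation.} The bound is per client, per round.
\begin{itemize}
\item Local training, Equation~\eqref{fine-tune}, costs $\mathcal{O}(T_{lt})$ by definition.
\item Model encryption calls \textbf{DEFE.Encrypt} once for each of the $\zeta$ parameters. Each call is one modular exponentiation $(1+N)^{x'}$ (computable cheaply as $1+x'N \bmod N^2$) and one $g^{r_i^{ctr}}$. The exponent $r_i^{ctr}$ is fixed within a round, so $g^{r_i^{ctr}}$ can be computed once and reused. This gives $\mathcal{O}(\zeta T_{me-defe})$.
\item Functional-key generation needs $\zeta$ integer multiplications $H_1(sk_i,ctr,\varepsilon)\cdot W_t^{(l)}[\varepsilon]$ and $\zeta$ additions to form the layer-level keys. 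These are hash and arithmetic operations, negligible next to exponentiations in $\mathbb{Z}_{N^2}$, so they are absorbed into the encryption term.
\end{itemize}

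\textbf{Server side.}
\begin{itemize}
\item Model detection runs \textbf{DEFE.AggDec} layer-wise for each client, normalises by $\lVert W_t^{(l)}\rVert_2$, and then runs K-Means on $I$ vectors of dimension $L$ followed by the cosine-similarity scoring. I group all of this into $\mathcal{O}(T_{md-defe})$.
\item Aggregation combines the $I$ partial keys via \textbf{DEFE.FunKeyAgg}. It then runs \textbf{DEFE.AggDec} once per parameter position over the $I$ ciphertexts, and the clients apply \textbf{DEFE.UsrDec}. I group this into $\mathcal{O}(T_{ma-defe})$.
\end{itemize}
Summing the four contributions gives the stated computation complexity. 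The only discretionary part is fixing, in the proof text, what each $T_{\cdot}$ abbreviation contains. For instance, the K-Means cost $\mathcal{O}(IKL\cdot\text{iter})$ is taken to be inside $T_{md-defe}$.

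\textbf{Communication.}
\begin{itemize}
\item Each of the $I$ clients uploads $E_t^i$, which is $\zeta$ ciphertexts of size $|w_{defe}|$. This gives $\mathcal{O}(I\zeta|w_{defe}|)$.
\item Each client uploads $skf_t^i$, which has $L$ scalar entries, and later the single scalar $skf_t^{(i,\mathsf{Agg})}$. This gives $\mathcal{O}(IL|w|)$ for the keys.
\item The server sends $\gamma_t$ of length $I$ to every client, which is $\mathcal{O}(I^2)$ bits. This is dominated, since I assume $I \le L|w|$, or the term is simply absorbed.
\item The server broadcasts $W_{t+1}'$. This is $\mathcal{O}(I\zeta|w|)$, which is dominated by the ciphertext upload because $|w|\le|w_{defe}|$.
\end{itemize}
Summing gives the stated communication bound.

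\textbf{Expected obstacle.} The main difficulty is not a calculation. It is making the bookkeeping consistent with the stated form. Two cases need a stated convention:
\begin{itemize}
\item Detection runs AggDec on $\zeta$ parameters per client, so its true cost scales as $I\zeta$ exponentiations. This must be placed inside $T_{md-defe}$ rather than appearing as a separate factor.
\item The global-model broadcast and the $\gamma_t$ messages must be justified as lower-order terms.
\end{itemize}
I will settle both by stating these conventions explicitly at the start of the proof.
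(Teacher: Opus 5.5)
Your proposal is correct and follows essentially the same route as the paper, whose proof is likewise a phase-by-phase tally. It lists four computation components (local training, $\zeta$ per-parameter encryptions, detection, aggregation) and two upload terms (the $I\zeta$ ciphertexts and the $IL$ key scalars), followed by a ShieldFL comparison that the statement itself does not need. Your extra bookkeeping (functional-key generation, the $\gamma_t$ and $W_{t+1}'$ downlinks, and the explicit convention that detection's $I\zeta$ exponentiations sit inside $T_{md-defe}$) is a harmless refinement the paper leaves implicit; the one aside to soften is reusing $g^{r_i^{ctr}}$, which fails if $r_i^{ctr}$ depends on $\varepsilon$ as the key formula $H_1(sk_i,ctr,\varepsilon)$ suggests, though $\mathcal{O}(\zeta T_{me-defe})$ holds either way.
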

\begin{proof}

To evaluate the efficiency of SRFed, we analyze its computational and communication overhead per training iteration and compare it with ShieldFL \cite{HE_shieldfl}. ShieldFL is an efficient PPFL framework based on the partially homomorphic encryption (PHE) scheme. The comparative results are presented in Table \ref{tab:com_comm}.
Specifically, the computational overhead of SRFed comprises four components: local training $\mathcal{O}(T_{lt})$, model encryption $\mathcal{O}(\zeta T_{me-defe})$, model detection $\mathcal{O}(T_{md-defe})$, and model aggregation $\mathcal{O}(T_{ma-defe})$. For model encryption, FE inherently offers a lightweight advantage over PHE, leading to $\mathcal{O}(\zeta T_{me-defe}) < \mathcal{O}(\zeta T_{me-phe})$.
In terms of model detection, SRFed performs this process primarily on the server side using plaintext data, whereas ShieldFL requires multiple rounds of interaction to complete the encrypted model detection. This results in $\mathcal{O}(T_{md-defe}) \ll \mathcal{O}(T_{md-phe})$. Furthermore, SRFed enables the server to complete decryption and aggregation simultaneously. In contrast, ShieldFL necessitates aggregation prior to decryption and involves interactions with a third party, resulting in significantly higher overhead, i.e., $\mathcal{O}(T_{ma-defe}) \ll \mathcal{O}(T_{ma-phe})$.
Overall, these characteristics collectively render SRFed more efficient than ShieldFL.
The communication overhead of SRFed comprises two components: the encrypted models $\mathcal{O}(I\zeta|w_{defe}|)$ and key vectors $\mathcal{O}(IL|w|)$ uploaded by $I$ clients, where $\zeta$ denotes the model dimension, $L$ is the number of layers, $|w_{defe}|$ and $|w|$ are the communication complexity of a single DEFE ciphertext and a single plaintext, respectively. Since $|w|$ is significantly lower than $|w_{defe}|$, and $|w_{defe}|$ and $|w_{phe}|$ are nearly equivalent, SRFed reduces the overall communication complexity by approximately $\mathcal{O}(12I\zeta|w_{phe}|)$ compared to ShieldFL. This reduction in overhead is primarily attributed to SRFed’s lightweight DEFE scheme, which eliminates extensive third-party interactions.

\end{proof}

\begin{table}[tt]
	\centering
	\caption{Comparison of computation and communication overhead between different methods}\label{tab:com_comm}
    \resizebox{\linewidth}{!}{
	\begin{threeparttable}
		\renewcommand{\arraystretch}{1.4} 
		\begin{tabular}{clc}
			\toprule
			Method & \makecell[c]{SRFed}  &  \makecell[c]{ShieldFL}     \\ 
			\midrule
			\multirow{3}{*}[0pt]{Comp.}  
			& $\mathcal{O}(T_{lt}) + \mathcal{O}(\zeta T_{me-defe})$   & $\mathcal{O}(T_{lt}) + \mathcal{O}(\zeta T_{me-phe})$   \\
			& $+\mathcal{O}(T_{ma-defe})$ & $+\mathcal{O}(\zeta T_{md-phe})$ \\
			& $+ \mathcal{O}(T_{ma-defe})$ & $+ \mathcal{O}(T_{md-phe})$ \\ 
			\hline
			\multirow{1}{*}{Comm.}     
			& $\mathcal{O}(I\zeta|w_{defe}|)^1 + \mathcal{O}(IL|w|)^2 $  & $\mathcal{O}(13I\zeta|w_{phe}|)^{\mathrm{3}}$ \\
			\bottomrule
		\end{tabular}
		\begin{tablenotes}
			\footnotesize
			\item \textbf{Notes}: $^{\mathrm{1,2,3}} |w_{defe}|$, $|w|$ and $|w_{phe}|$ denote the communication complexity of a DEFE ciphertext, a plaintext, and a PHE ciphertext, respectively.
		\end{tablenotes}
	\end{threeparttable}}
\end{table}

\section{Experiments}
\subsection{Experimental Settings}
\subsubsection{Implementation} 

We implement SRFed on a small-scale local network. Each machine in the network is equipped with the following hardware configuration: an Intel Xeon CPU E5-1650 v4, 32 GB of RAM, an NVIDIA GeForce GTX 1080 Ti graphics card, and a network bandwidth of 40 Mbps. Additionally, the implementation of the DEFE scheme is based on the NDD-FE scheme \cite{NDDFE_2022_Big_data}, and the code implementation of FL processes is referenced to \cite{coding}.

\subsubsection{Dataset and Models}
We evaluate the performance of SRFed on two datasets:
\begin{itemize}
		\item MNIST \cite{mnist_database}: This dataset consists of 10 classes of handwritten digit images, with 60,000 training samples and 10,000 test samples. Each sample is a grayscale image of 28 × 28 pixels. The global model used for this dataset is a Convolutional Neural Network (CNN) model, which includes two convolutional layers followed by two fully connected layers.

        \item CIFAR-10 \cite{CiFar10}: This dataset contains RGB color images across 10 categories, including airplane, car, bird, cat, deer, dog, frog, horse, boat, and truck. It consists of 50,000 training images and 10,000 test samples. Each sample is a 32 × 32 pixel color image. The global model used for this dataset is a CNN model, which includes three convolutional layers, one pooling layer, and two fully connected layers. 

\end{itemize}

\subsubsection{Baselines}\label{baselines}

To evaluate the robustness of the proposed SRFed method, we conduct comparative experiments against several advanced baseline methods, including FedAvg \cite{FL}, ShieldFL \cite{HE_shieldfl}, PBFL \cite{pbfl}, Median \cite{Median_original}, Biscotti \cite{Mkrum}, and FoolsGold \cite{foolsgold}. Furthermore, to evaluate the efficiency of SRFed, we compare it with representative methods such as ShieldFL \cite{HE_shieldfl} and ESB-FL \cite{NDDFE_2022_Big_data}.

\subsubsection{Experimental parameters} 


In all experiments, the number of local clients is set to 20, the number of training rounds is set to 100, the batchsize is set to 64, and the number of local training epochs is set to 10. We use the stochastic gradient descent (SGD) to optimize the model, with a learning rate of 0.01 and a momentum of 0.5. Additionally, our experiments are conducted under varying levels of data heterogeneity, with the data distributions configured as follows:
\begin{itemize}
\item MNIST: Two distinct levels of data heterogeneity are configured by sampling from a Dirichlet distribution with the parameters $\alpha = 0.2$ and $\alpha = 0.8$, respectively, to simulate Non-IID data partitions across clients.
\item CIFAR-10: Two distinct levels of data heterogeneity are configured by sampling from a Dirichlet distribution with the parameters $\alpha = 0.2$ and $\alpha = 0.6$, respectively, to simulate Non-IID data partitions across clients.
\end{itemize}

\subsubsection{Attack Scenario}

In each benchmark, the adversary can control a certain proportion of clients to launch poisoning attacks, with the proportion varying across \{0\%, 10\%, 20\%, 30\%, 40\%, 50\%\}. The attack scenario parameters are configured as follows:
\begin{itemize}
\item Targeted Poisoning Attack: we consider the mainstream label-flipping attack. For experiments on the MNIST dataset, the training samples originally labeled as "0" are reassigned to the target label "4". For the CIFAR-10 dataset, the training samples originally labeled as "airplane" are reassigned to the target label "deer".
\item Untargeted Poisoning Attack: We consider the commonly used Gaussian attack. In experiments, malicious clients inject noise that follows a Gaussian distribution \(\mathcal{N}(0, 0.5^2)\) into their local model updates.
\end{itemize}

\subsubsection{Evaluation Metrics}

For each benchmark experiment, we adopt the following evaluation metrics on the test dataset to quantify the impact of poisoning attacks on the aggregated model in FL.
\begin{itemize}
\item Overall Accuracy (OA): It is the ratio of the number of samples correctly predicted by the model in the test dataset to the total number of predictions for all samples in the test dataset. 
\item Source Accuracy (SA): It specifically refers to the ratio of the number of correctly predicted flip class samples by the model to the total number of flip class samples in the dataset.
\item Attack Success Rate (ASR): It is defined as the proportion of source-class samples that are misclassified as the target class by the aggregated model.
\end{itemize}

\begin{figure}[!t]  
    \centering  
    \subfloat[MNIST ($\alpha$=0.2)]{\includegraphics[width=0.45\linewidth]{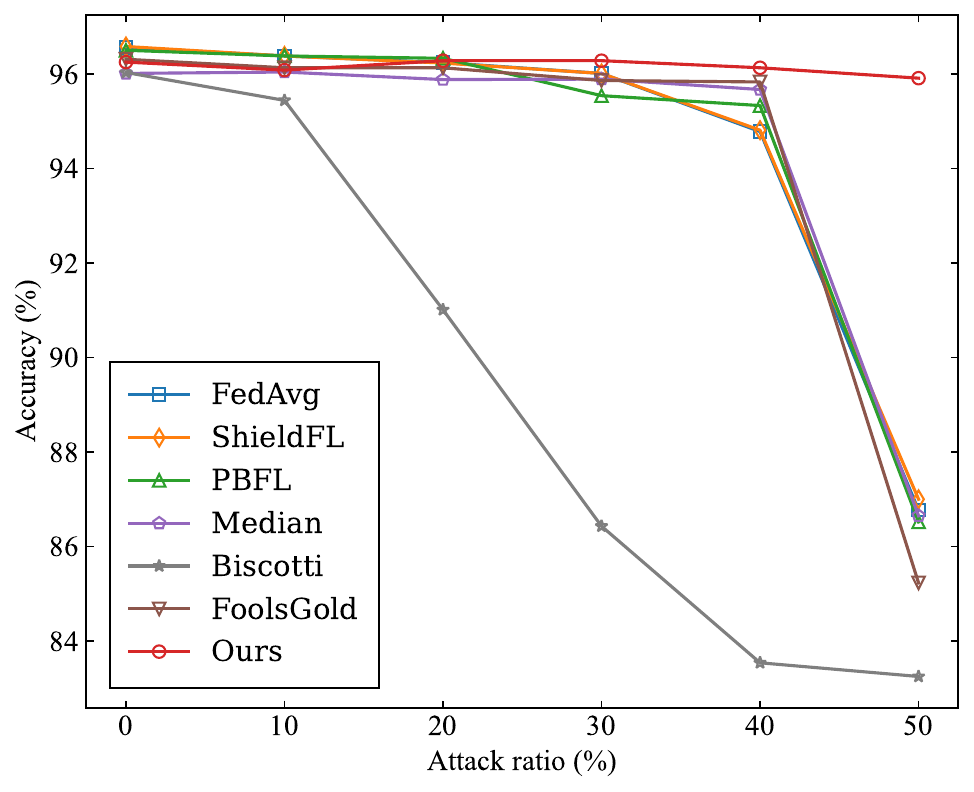}}
    \hfill  
    \subfloat[MNIST ($\alpha$=0.8)]{\includegraphics[width=0.45\linewidth]{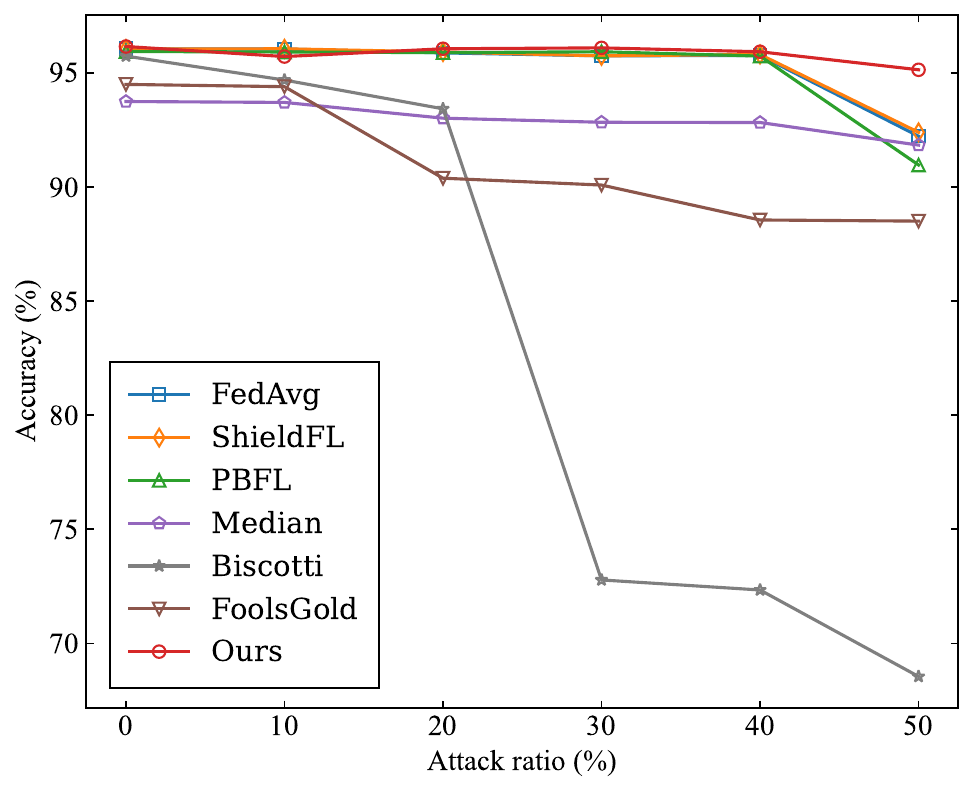}} \\
    \subfloat[CIFAR10 ($\alpha$=0.2)]{\includegraphics[width=0.45\linewidth]{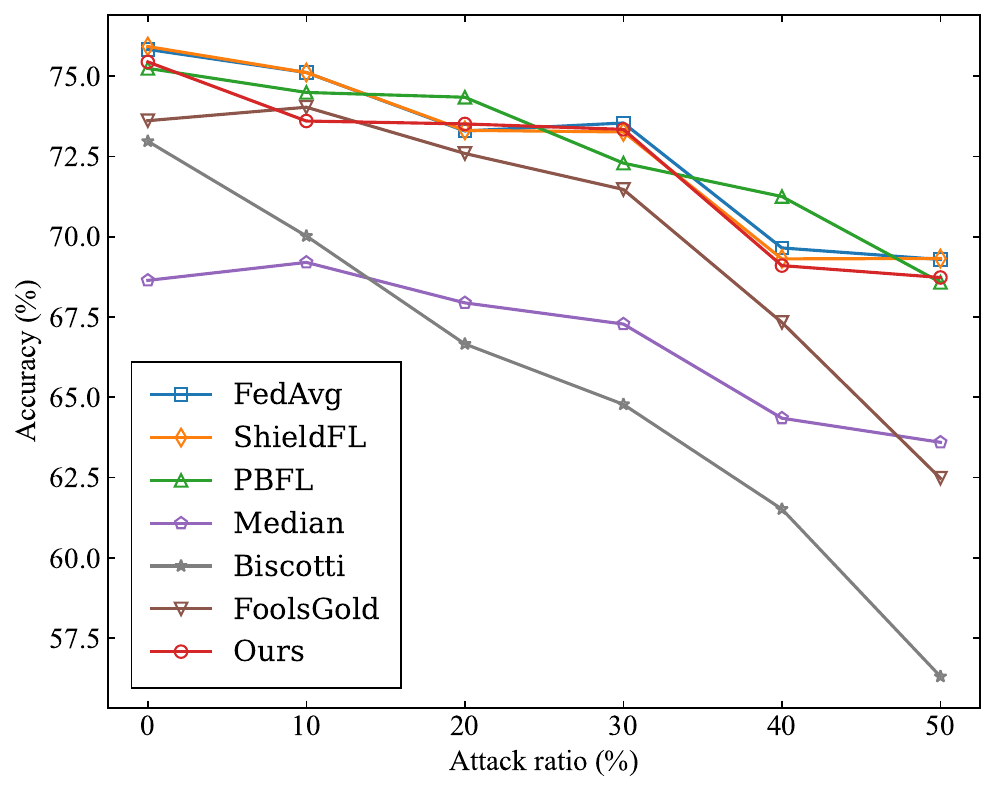}}
    \hfill
    \subfloat[CIFAR10 ($\alpha$=0.6)]{\includegraphics[width=0.45\linewidth]{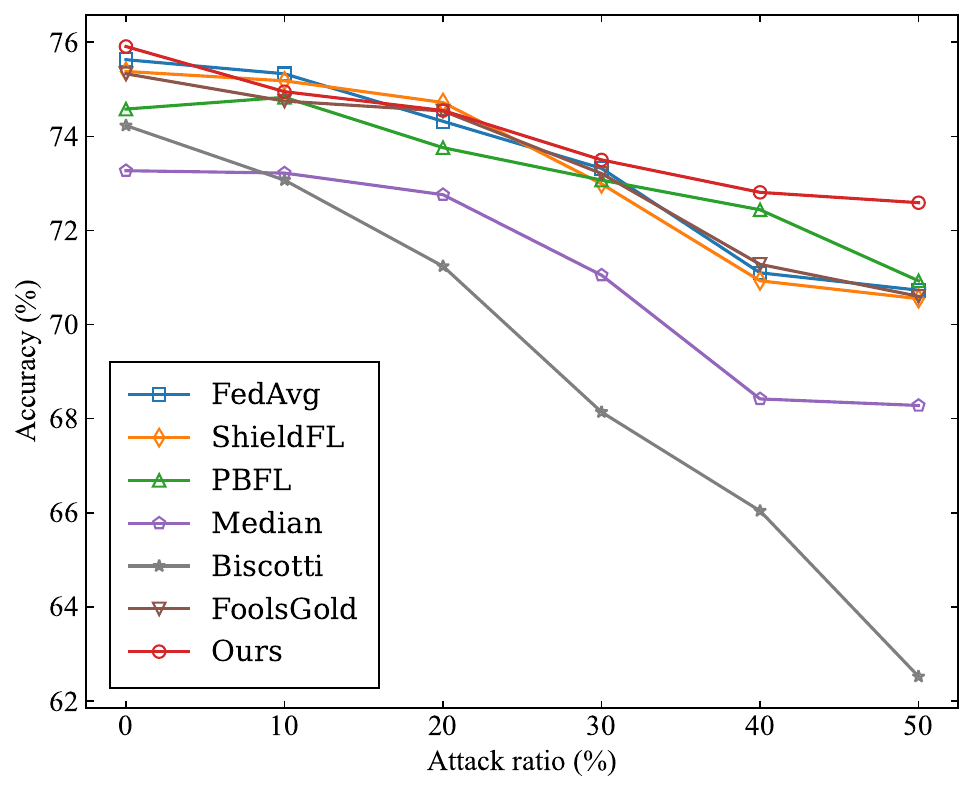}}
    \caption{The OA of the models obtained by four benchmarks under label-flipping attack.}  
    \label{lf_oa_benchmarks}  
\end{figure}
\begin{figure}[!t]  
    \centering  
    \subfloat[MNIST ($\alpha$=0.2)]{\includegraphics[width=0.45\linewidth]{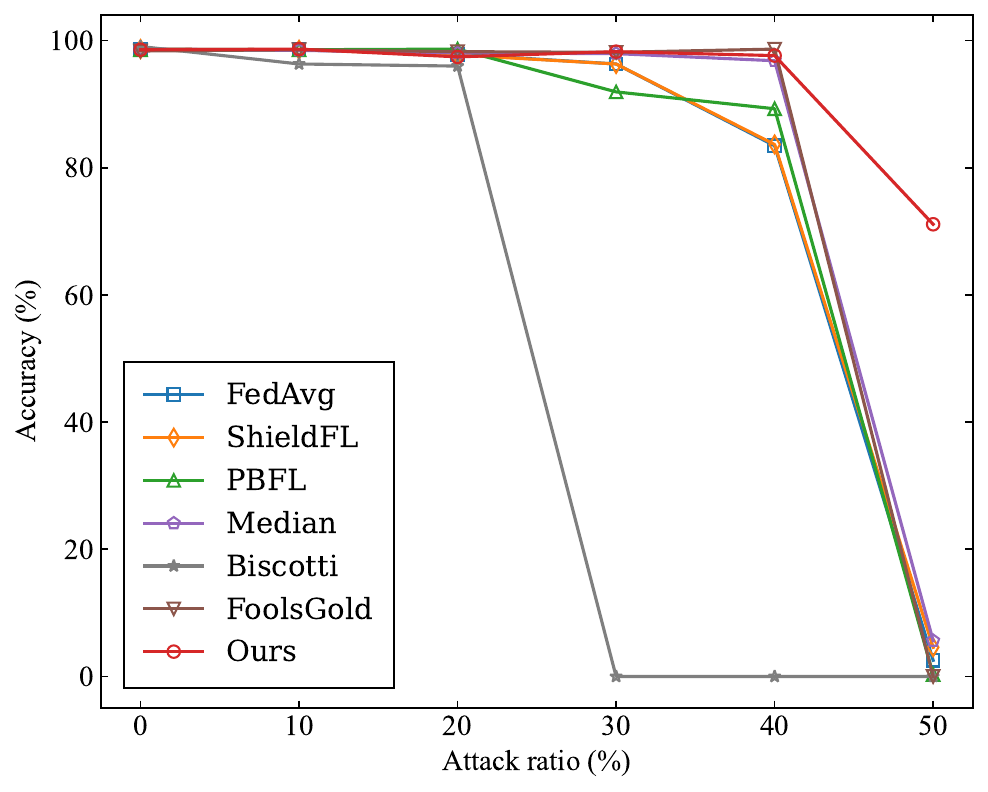}}
    \hfill  
    \subfloat[MNIST ($\alpha$=0.8)]{\includegraphics[width=0.45\linewidth]{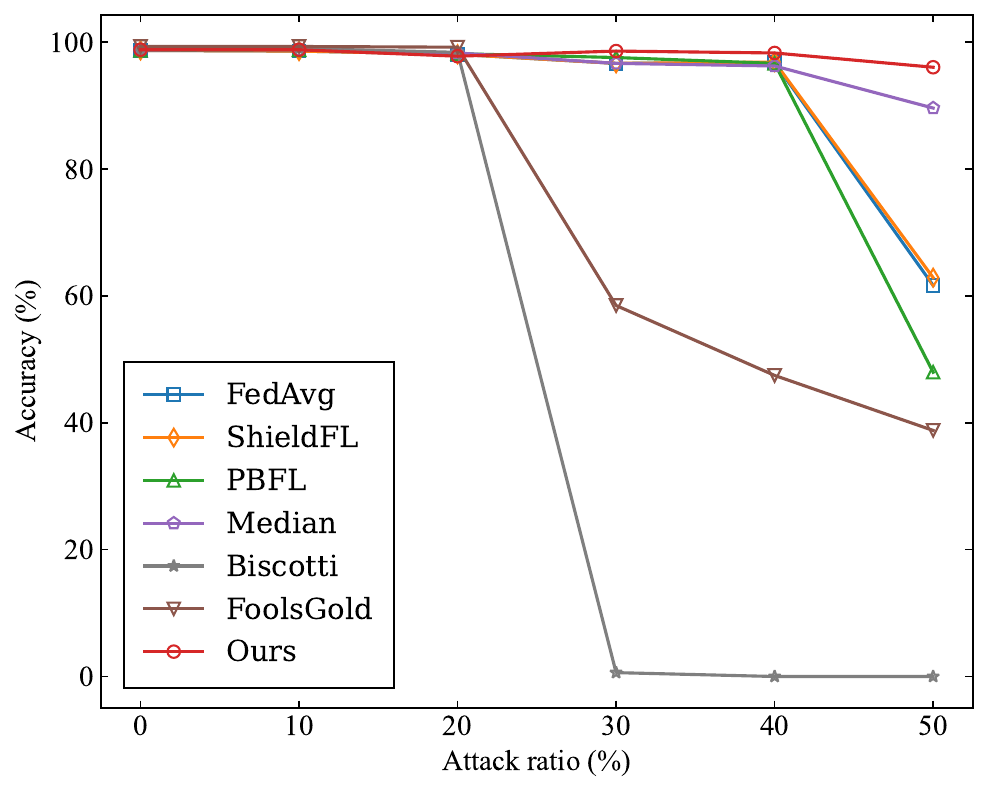}} \\
    \subfloat[CIFAR10 ($\alpha$=0.2)]{\includegraphics[width=0.45\linewidth]{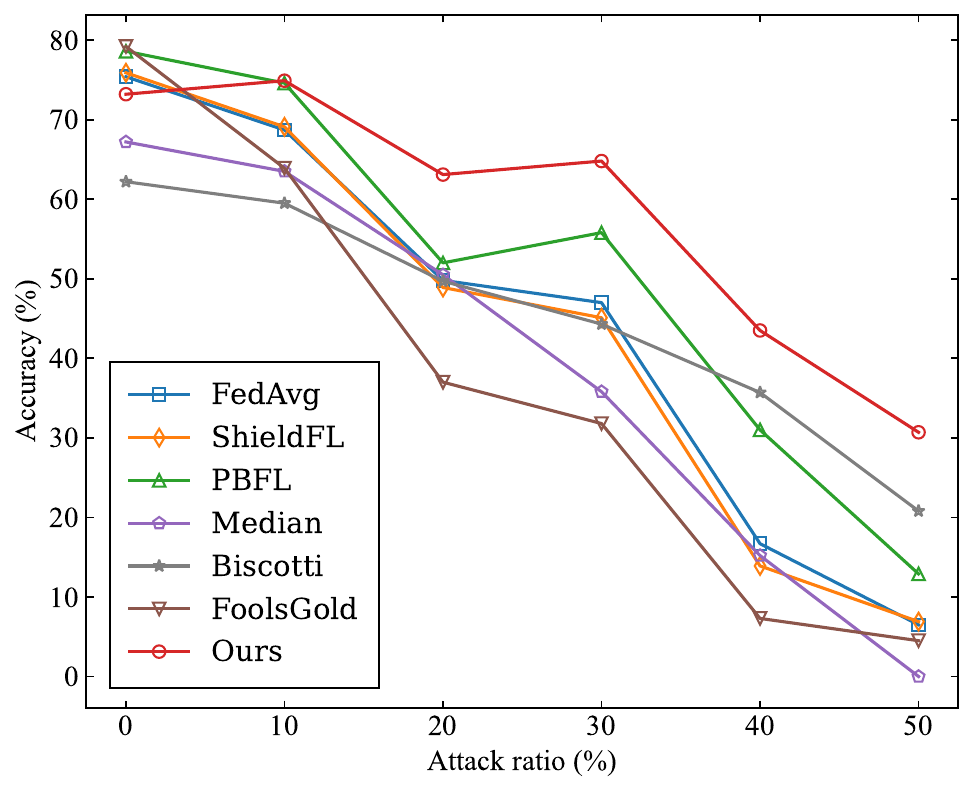}}
    \hfill
    \subfloat[CIFAR10 ($\alpha$=0.6)]{\includegraphics[width=0.45\linewidth]{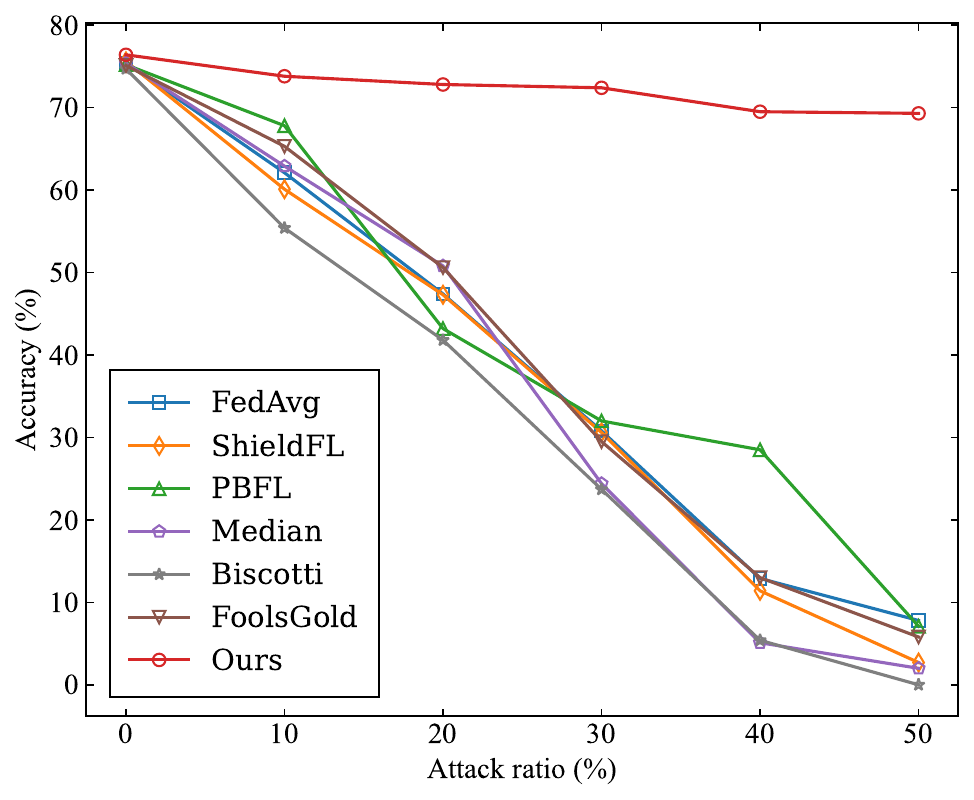}}
    \caption{The SA of the models obtained by four benchmarks under label-flipping attack.}  
    \label{lf_sa_benchmarks}  
\end{figure}

\subsection{Experimental Results}
\subsubsection{Robustness Evaluation of SRFed}

To evaluate the robustness of the proposed SRFed framework, we conduct a comparative analysis against the six baseline methods discussed in Section \ref{baselines}. Specifically, we first evaluate the overall accuracy (OA) of FedAvg, ShieldFL, PBFL, Median, Biscotti, FoolsGold, and SRFed under the label-flipping attacks. The results are presented in Figure \ref{lf_oa_benchmarks}. In the MNIST benchmarks with two levels of heterogeneity, the proposed SRFed consistently maintains a high OA across varying proportions of malicious clients. In the MNIST ($\alpha$=0.8) benchmark, all methods except Biscotti demonstrate relatively strong defense performance. Similarly, in the MNIST ($\alpha$=0.2) benchmark, all methods, except Biscotti, continue to perform well when facing a malicious client proportion ranging from 0\% to 40\%. Biscotti’s poor performance is due to the fact that, in Non-IID data scenarios, the model distributions trained by benign clients are more scattered, which can lead to their incorrect elimination. In the CIFAR-10 dataset benchmarks with different levels of heterogeneity, the OA of all methods fluctuates as the proportion of malicious clients increases. However, SRFed generally maintains better performance compared to the other methods, owing to the effectiveness of its robust aggregation strategy.

We further compare the SA of the global models achieved by different methods in the four benchmarks under label-flipping attacks. SA accurately measures the defense effectiveness of different methods against poisoning attacks, as it specifically reveals the model’s accuracy on the samples of the flipping label. The experimental results are presented in Figure \ref{lf_sa_benchmarks}. In the MNIST ($\alpha=0.2$) benchmark, SRFed demonstrates a significant advantage in defending against label-flipping attacks. Especially, when the attack ratio reaches 50\%, SRFed achieves a SA of 70\%, while the SA of all other methods drops to nearly 0\% even though their OA remaining above 80\%. SRFed is the only method to sustain a high SA across all attack ratios, underscoring its superior Byzantine robustness even in scenarios with extreme data heterogeneity and high attack ratios. In the MNIST ($\alpha=0.8$) benchmark, SRFed also outperforms other baselines. In the CIFAR-10 ($\alpha=0.2$) benchmark, although SRFed still outperforms the other methods, its performance gradually deteriorates as the proportion of malicious clients increases. This demonstrates that defending against poisoning attacks in scenarios with a high attack ratio and extremely heterogeneous data remains a significant challenge. In the CIFAR-10 ($\alpha=0.6$) benchmark, SRFed maintains a high level of performance as the proportion of malicious clients increases (SA $\ge$ 70\%), while the SA of all other methods sharply declines and eventually approaches 0\%. This superior performance is attributed to the robust aggregation strategy of SRFed, which performs layer-wise projection and clustering analysis on client models. This enables more accurate detection of local parameter anomalies compared to baselines.

We also evaluate the ASR of the models obtained by different methods across four benchmarks, with the experimental results presented in Figure \ref{lf_asr_benchmarks}. As the attack ratio increases, we can observe that the ASR trend exhibits a negative correlation with the SA trend. Notably, our proposed SRFed consistently demonstrates optimal performance across all four benchmarks, showing minimal performance fluctuations across varying attack ratios.

\begin{figure}[!t]  
    \centering  
    \subfloat[MNIST ($\alpha$=0.2)]{\includegraphics[width=0.45\linewidth]{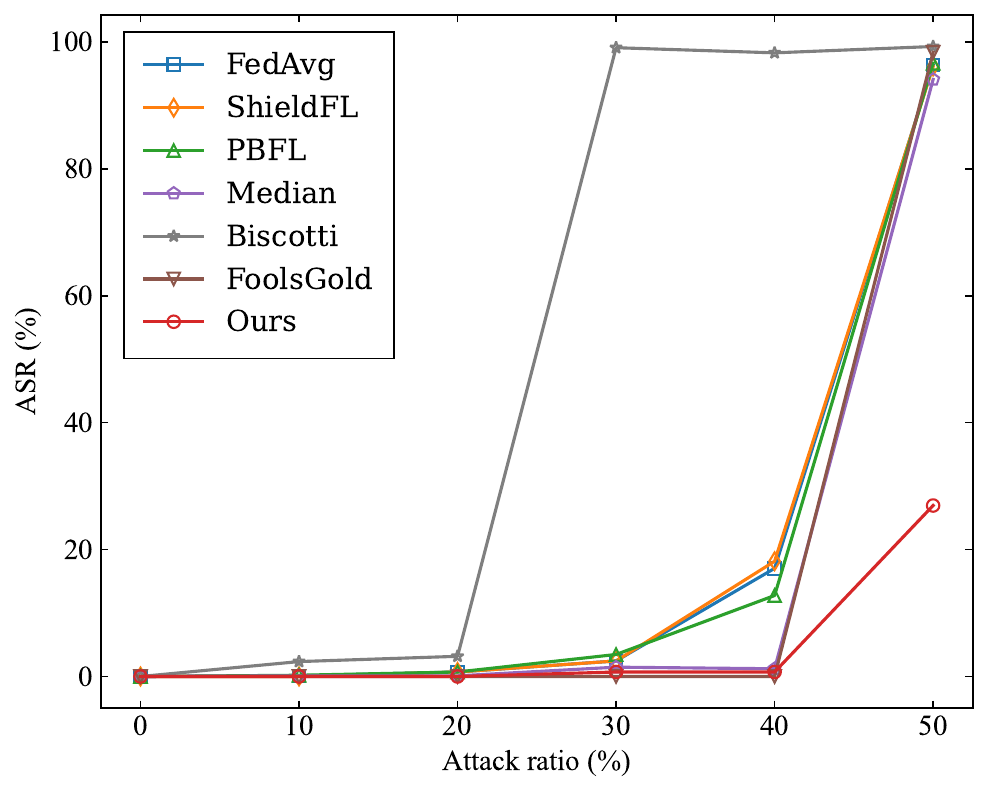}}
    \hfill  
    \subfloat[MNIST ($\alpha$=0.8)]{\includegraphics[width=0.45\linewidth]{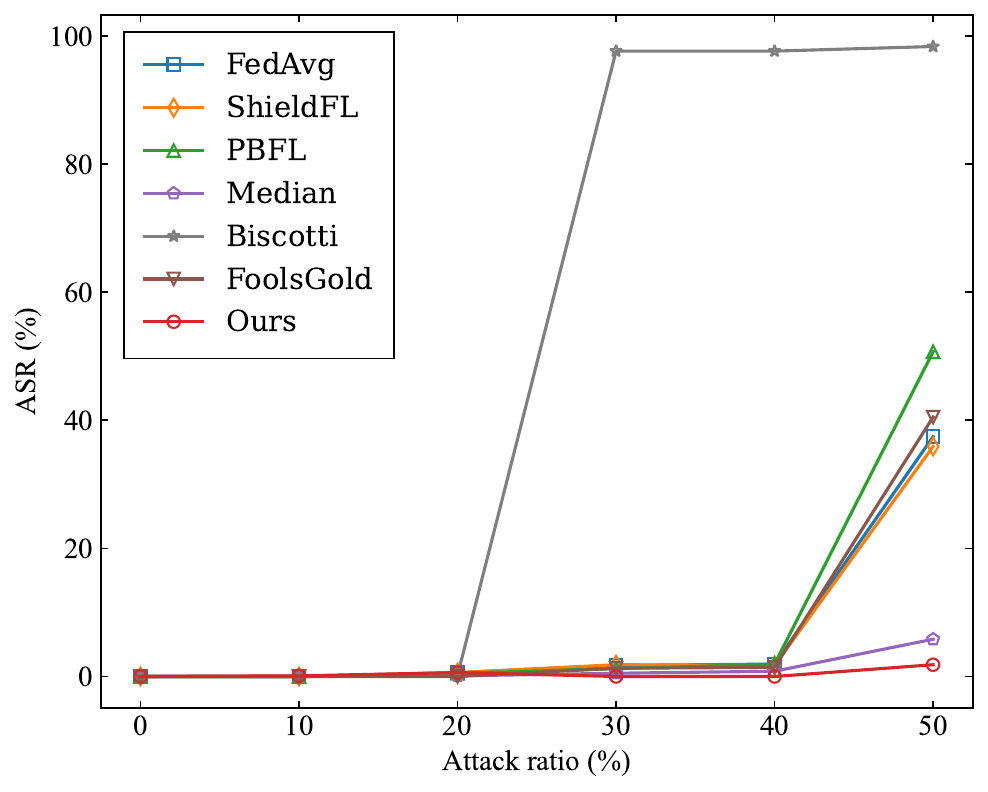}} \\
    \subfloat[CIFAR10 ($\alpha$=0.2)]{\includegraphics[width=0.45\linewidth]{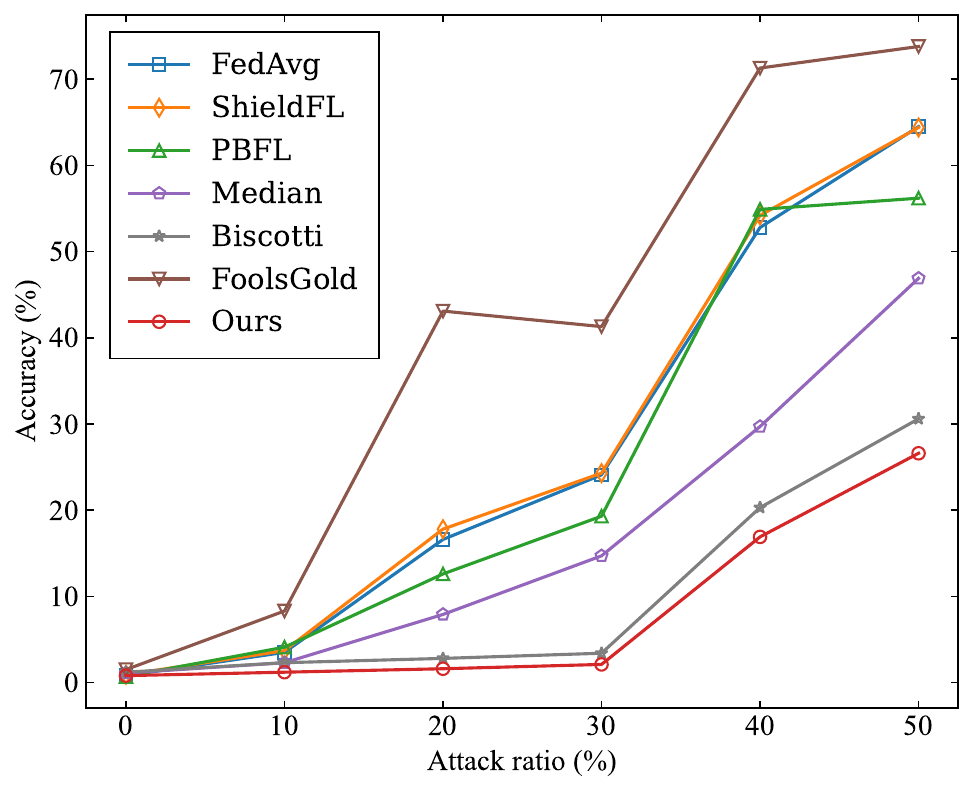}}
    \hfill
    \subfloat[CIFAR10 ($\alpha$=0.6)]{\includegraphics[width=0.45\linewidth]{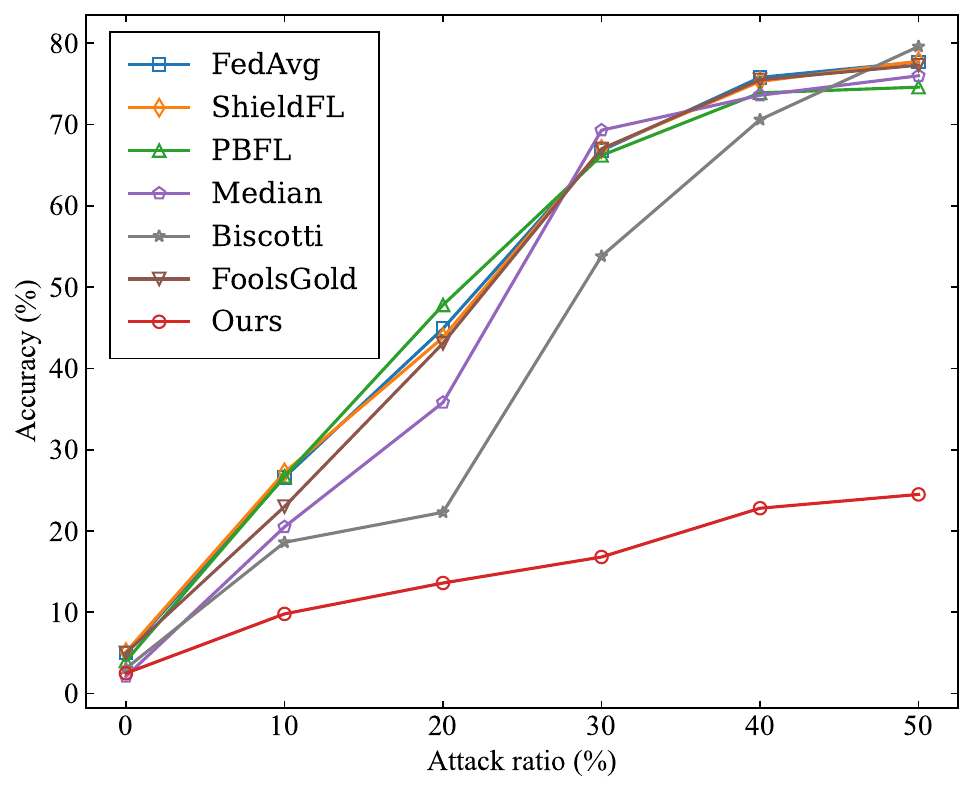}}
    \caption{The ASR of the models obtained by four benchmarks under label-flipping attack.}  
    \label{lf_asr_benchmarks}  
\end{figure}

Finally, we evaluate the OA of the models of different methods under the Gaussian attack. The experimental results are shown in Figure \ref{Ga_oa_benchmarks}. We observe that SRFed consistently achieves optimal performance across all four benchmarks. Furthermore, as the attack ratio increases, SRFed exhibits minimal fluctuations in OA. Specifically, in the MNIST ($\alpha=0.2$) and MNIST ($\alpha=0.8$) benchmarks, all methods maintain an OA above 90\% when the attack ratio is $\leq$ 20\%. However, when the attack ratio $\ge$ 30\%, only SRFed and Median retain an OA above 90\%, demonstrating their effective defense against poisoning attacks under high malicious client ratios. In the CIFAR-10 ($\alpha=0.2$) and CIFAR-10 ($\alpha=0.6$) benchmarks, while the OA of most methods drops below 30\% as the attack ratio increases, SRFed consistently maintains high accuracy across all attack rates, demonstrating its robustness against extreme client ratios and heterogeneous data distributions.

\begin{figure}[!t]  
    \centering  
    \subfloat[MNIST ($\alpha$=0.2)]{\includegraphics[width=0.45\linewidth]{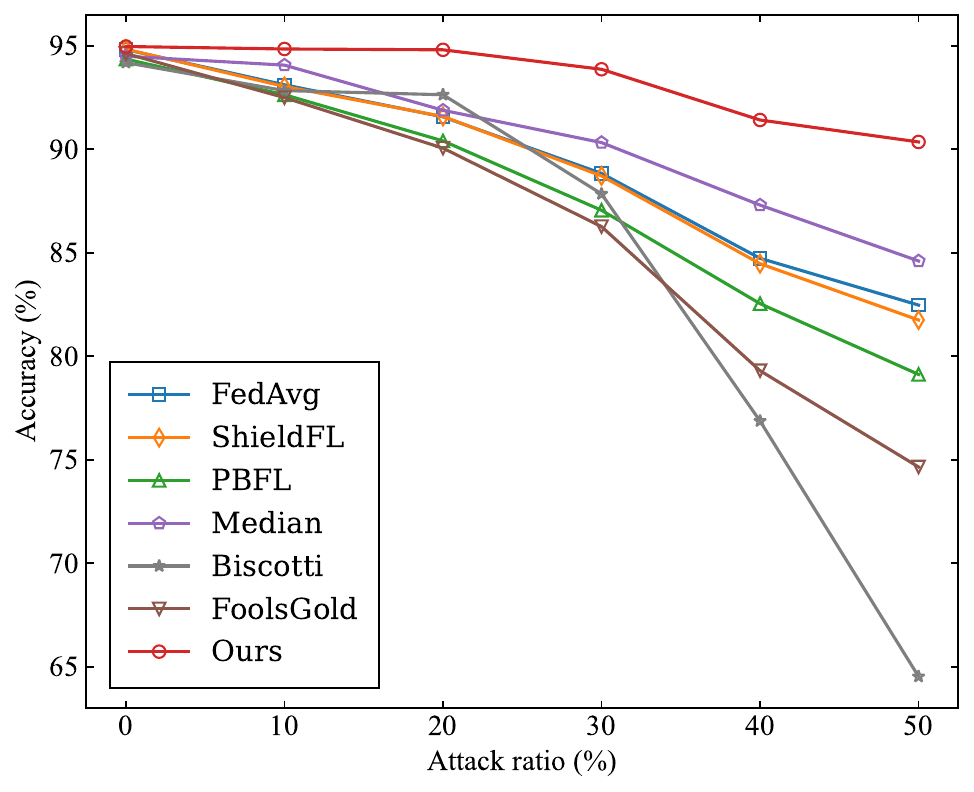}}
    \hfill  
    \subfloat[MNIST ($\alpha$=0.8)]{\includegraphics[width=0.45\linewidth]{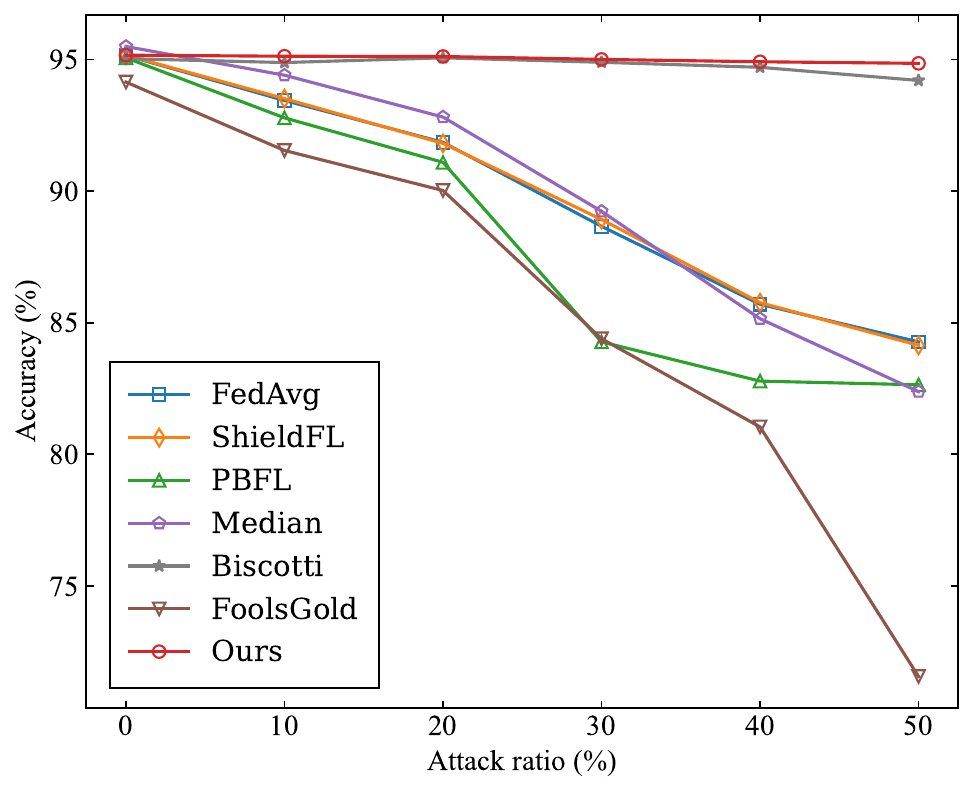}} \\
    \subfloat[CIFAR10 ($\alpha$=0.2)]{\includegraphics[width=0.45\linewidth]{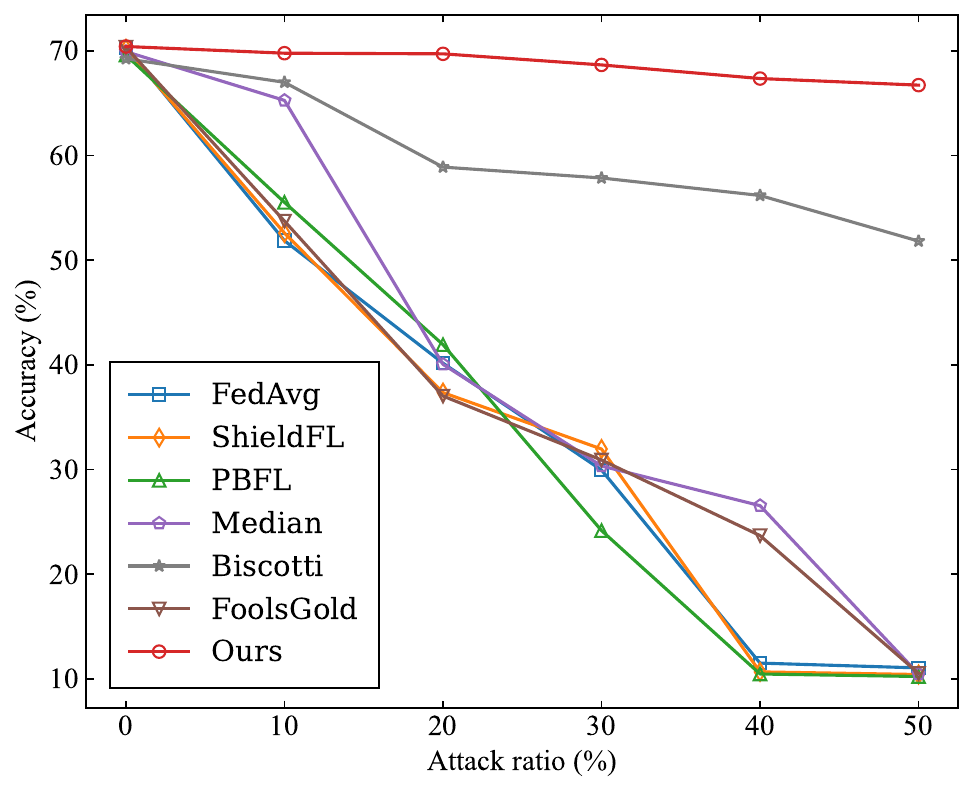}}
    \hfill
    \subfloat[CIFAR10 ($\alpha$=0.6)]{\includegraphics[width=0.45\linewidth]{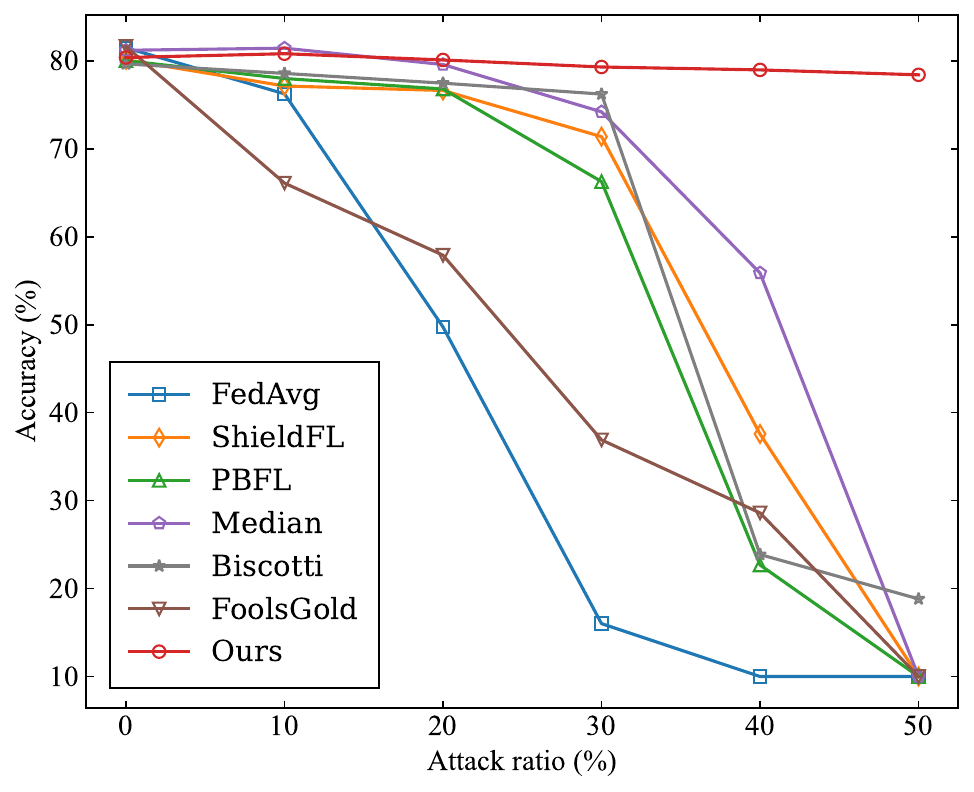}}
    \caption{The OA of the models obtained by four benchmarks under Gaussian attack.}  
    \label{Ga_oa_benchmarks}  
\end{figure}

In summary, SRFed demonstrates strong robustness against poisoning attacks under different Non-IID data settings and attack ratios, thus achieving the design goal of robustness.


\subsubsection{Efficiency Evaluation of SRFed} 
\textbf{Learning Overheads.} We evaluate the efficiency of the proposed SRFed in obtaining a qualified aggregated model. Specifically, we compare SRFed with two baseline methods, i.e., ESB-FL and ShieldFL. These two methods respectively utilize NDD-FE and HE to ensure privacy protection for local models. The experiments are conducted on MNIST with no malicious clients. For each method, we conduct 10 training tasks and calculate the average time consumed in each phase, along with the average communication time across all participants. The results are summarized in Table \ref{tab:time_consumption}. The experimental results demonstrate that SRFed reduces the total time overheads throughout the entire training process by 58\% compared to ShieldFL. This reduction can be attributed to two main factors: 1) DEFE in SRFed offers a significant computational efficiency advantage over HE in ShieldFL, with faster encryption and decryption, as shown in the "Local training" and "Privacy-preserving robust model aggregation" phases in Table \ref{tab:time_consumption}. 2) The privacy-preserving robust model aggregation is handled solely by the server, which avoids the overhead of multi-server interactions in ShieldFL. Compared to ESB-FL, SRFed reduces the total time overhead by 22\% even though it incorporates an additional privacy-preserving model detection phase. This is attributed to its underlying DEFE scheme, which significantly enhances decryption efficiency. As a result, SRFed achieves a 71\% reduction in execution time during the privacy-preserving robust model aggregation phase, even with the added overhead of model detection. In summary, SRFed achieves an efficient privacy-preserving FL process, achieving the design goal of efficiency.

\begin{table}[tt]
	\centering
	\caption{Comparison of time consumption between different frameworks}\label{tab:time_consumption}
    \resizebox{\linewidth}{!}{
	\begin{threeparttable}
		\renewcommand{\arraystretch}{1.4}
		\begin{tabular}{cccc}
			\toprule
			Framework & SRFed & ShieldFL & ESB-FL \\ 
			\midrule
			\multirow{1}{*}{Local training$^1$} 
			& 19.51 h & 14.23 h & 5.16 h \\
			
			Privacy-preserving   
			& \multirow{2}{*}{9.09 h} & \multirow{2}{*}{51.97 h} & \multirow{2}{*}{31.43 h} \\
             robust model aggregation$^2$ & & & \\
			
			\multirow{1}{*}{Node communication}  
			& 0.09 h & 1.51 h & 0.09 h \\
			
			\textbf{Total time}  
			& \textbf{28.69 h} & \textbf{67.71 h} & \textbf{36.68 h} \\
			
			\textbf{Accuracy}  
			& \textbf{98.90\%} & \textbf{97.42\%} & \textbf{98.68\%} \\
			\bottomrule
		\end{tabular}
	\end{threeparttable}
    }
\end{table}

\begin{table}[tt]
	\centering
	\caption{Time overhead of proposed DEFE}\label{tab:comp_overhead}
	\begin{threeparttable}
		\renewcommand{\arraystretch}{1.4}
		\begin{tabular}{lccc}
			\toprule
			Operations & DEFE & NDD-FE & HE \\ 
			\multicolumn{1}{l}{(for a model)} & (SRFed) & (ESB-FL) & (shieldFL) \\
			\midrule
			\multirow{1}{*}{Encryption}  
			& 28.37 s & 2.53 s & 18.87 s \\
			
			\multirow{1}{*}{Inner product}  
			& 8.97 s & 56.58 s & 30.15 s \\
			
			\multirow{1}{*}{Decryption}  
			& - & - & 3.10 s \\
			\bottomrule
		\end{tabular}
	\end{threeparttable}
\end{table}

\textbf{Efficiency Evaluation of DEFE.} We further evaluate the efficiency of the DEFE scheme within SRFed by conducting experiments on the CNN model of the MNIST dataset. Specifically, we compare the DEFE scheme with the NDD-FE scheme used in ESB-FL \cite{NDDFE_2022_Big_data} and the HE scheme used in ShieldFL \cite{HE_shieldfl}. For these schemes, we calculate their average time required for different operations, i.e., encryption, inner product computation, and decryption, over 100 test runs. The results are presented in Table \ref{tab:comp_overhead}. It is evident that the DEFE scheme offers a substantial efficiency advantage in terms of inner product computation. Specifically, the inner product time of DEFE is reduced by 84\% compared to NDD-FE and by 70\% compared to HE. Furthermore, DEFE directly produces the final plaintext result during inner product computation, avoiding the need for interactive decryption in HE. Combined with the results in Table \ref{tab:time_consumption}, it is clear that although the encryption time of DEFE is slightly higher, its highly efficient decryption process significantly reduces the overall computation overhead. Thus, DEFE guarantees the high efficiency of SRFed.

\section{Conclusion}

In this paper, we address the challenges of achieving both privacy preservation and Byzantine robustness in FL under Non-IID data distributions, and propose a novel secure and efficient FL method SRFed. Specifically, we design a DEFE scheme that enables efficient model encryption and non-interactive decryption, which eliminates third-party dependency and defends against server-side inference attacks. Second, we develop a privacy-preserving robust aggregation mechanism based on secure layer-wise projection and clustering, which effectively filters malicious updates and mitigates poisoning attacks in data heterogeneous environments. Theoretical analysis and extensive experimental results demonstrate that SRFed achieves superior performance compared to state-of-the-art baselines in terms of privacy protection, Byzantine resilience, and system efficiency. In future work, we will explore the extension of SRFed to practical FL scenarios, such as vertical FL, edge computing, and personalized FL.




\begin{wrapfigure}{l}{25mm}   
  \centering
  \includegraphics[width=1in,height=1.25in,clip,keepaspectratio]{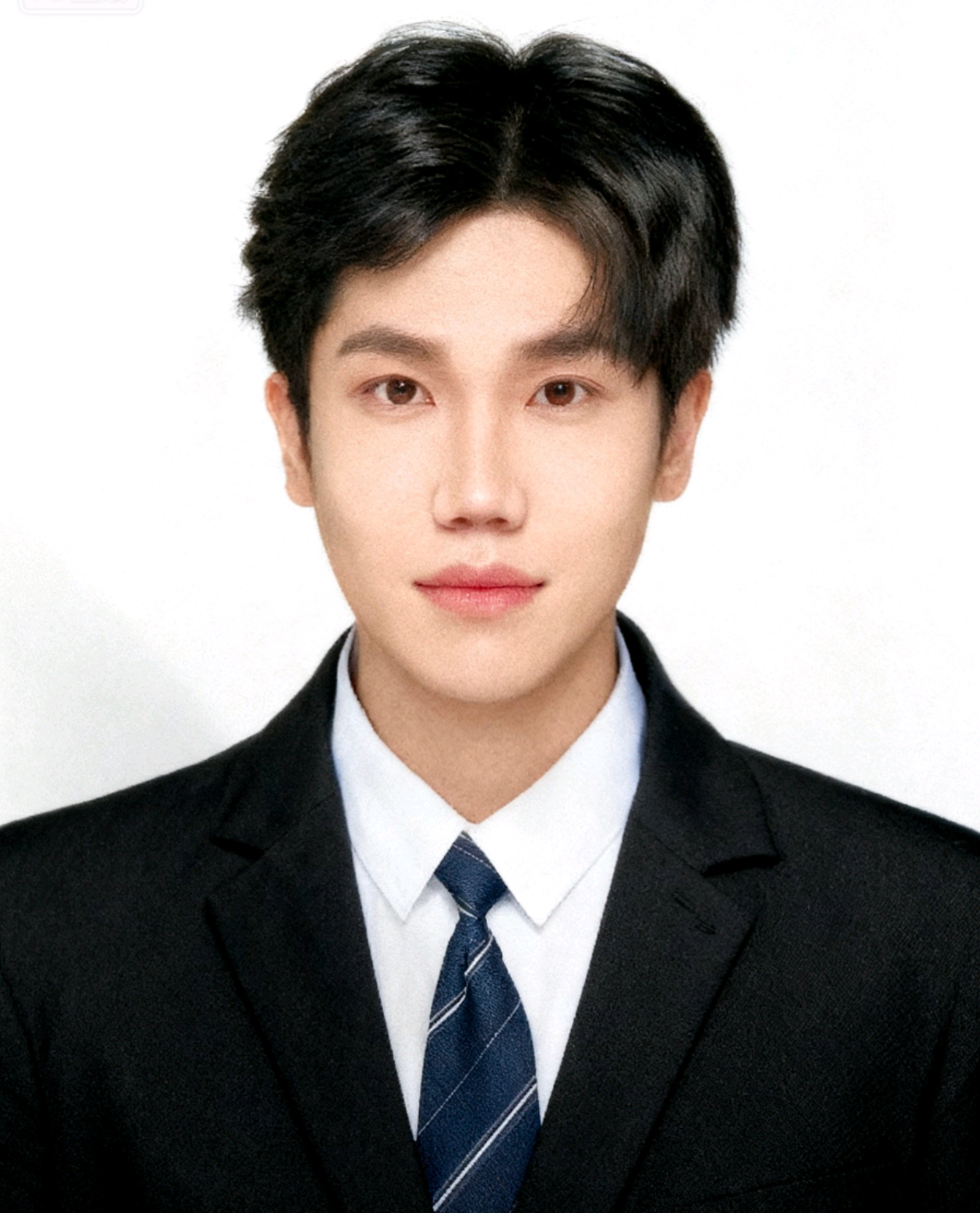}
\end{wrapfigure}

\vspace*{\baselineskip}
\noindent\textbf{Yiwen Lu} received the B.S. degree from the School of Mathematics and Statistics, Central South University (CSU), Changsha, China, in 2021. He is currently working toward the Ph.D. degree in mathematics with the School of Mathematics, Nanjing University (NJU), Nanjing, China. His research interests include number theory, cryptography, and artificial intelligence security.
\end{document}